  \renewcommand\appendix{\par
  \setcounter{section}{0}
  \setcounter{subsection}{0}
  \setcounter{figure}{0}
  \setcounter{table}{0}
  \renewcommand\thesection{ Appendix \Alph{section}}
  \renewcommand\thefigure{\Alph{section}\arabic{figure}}
  \renewcommand\thetable{\Alph{section}\arabic{table}}
}
\tikzstyle{mybox} = [draw=black, fill=white,  thick,
\tikzstyle{mybox} = [draw=black, fill=white,  thick,
\newtheorem{lemma}{Lemma}[section]
\newtheorem{theorem}{Theorem}[section]
\begin{document}

%\mainmatter  % start of an individual contribution

% first the title is needed
\title{An improved Constant-Factor Approximation Algorithm for Planar Visibility Counting Problem}

% a short form should be given in case it is too long for the running head
%\titlerunning{Visibility Testing and Counting}

% the name(s) of the author(s) follow(s) next
%
% NB: Chinese authors should write their first names(s) in front of
% their surnames. This ensures that the names appear correctly in
% the running heads and the author index.
%
\author{Sharareh Alipour  \and Mohammad Ghodsi   \and Amir Jafari}
%\authorrunning{Alipour and ...}

% (feature abused for this document to repeat the title also on left hand pages)

% the affiliations are given next; don't give your e-mail address
% unless you accept that it will be published
%\institute{Department of Computer engineering\\ Sharif University of Technology \\
%Tehran, Iran\\
%}

%
% NB: a more complex sample for affiliations and the mapping to the
% corresponding authors can be found in the file "llncs.dem"
% (search for the string "\mainmatter" where a contribution starts).
% "llncs.dem" accompanies the document class "llncs.cls".
%

%\toctitle{Visibility Testing and Counting}
%\tocauthor{Visibility Testing and Counting}
\maketitle

\begin{abstract}Given a set $S$ of $n$ disjoint line segments in $\mathbb{R}^{2}$, the visibility counting problem (VCP) is to preprocess $S$ such that the number of segments in $S$ visible from any query point $p$ can be computed quickly. This problem can  trivially  be solved in logarithmic query time  using $O(n^{4})$ preprocessing time and space.
Gudmundsson and Morin  proposed a 2-approximation algorithm for this problem with a tradeoff between the space and the query time. They answer any query  in $O_{\epsilon}(n^{1-\alpha})$ with $O_{\epsilon}(n^{2+2\alpha})$ of preprocessing time and space, where $\alpha$ is a constant $0\leq \alpha\leq 1$, $\epsilon > 0$ is another constant that can be made arbitrarily small, and $O_{\epsilon}(f(n))=O(f(n)n^{\epsilon})$.

In this paper, we propose a randomized approximation algorithm for VCP with a tradeoff between the space and the query time. We will show that  for an arbitrary constants $0\leq \beta\leq \frac{2}{3}$ and $0<\delta <1$,   the expected preprocessing time, the expected space, and the query time of our algorithm are $O(n^{4-3\beta}\log n)$, $O(n^{4-3\beta})$,  and $O(\frac{1}{\delta^3}n^{\beta}\log n)$, respectively. The algorithm computes the number of visible segments from $p$, or $m_p$, exactly 
if $m_p\leq \frac{1}{\delta^3}n^{\beta}\log n$. Otherwise,  it computes a $(1+\delta)$-approximation 
$m'_p$   with the probability of at least $1-\frac{1}{\log n}$, where $m_p\leq m'_p\leq (1+\delta)m_p$.
\smallskip
\\
\noindent \textbf{Keywords.} computational geometry, visibility, randomized algorithm, approximation algorithm, graph theory.

\end{abstract}
\section{Introduction}

\subsection*{\it Problem Statement}
Let $S=\{s_1, s_2,\dots, s_n\}$ be a set of $n$ disjoint closed line segments in the plane contained in a bounding box, $\mathbb{B}$. Two points $p$ and $q$ in the bounding box are visible to each other with respect to  $S$, if the open line segment $\overline{pq}$ does not intersect any segments of $S$. A segment $s_i\in S$ is also said to be visible from a point $p$, if there exists a point $q\in s_i$ such that $q$ is visible from $p$. \textit{The visibility counting problem (VCP)} is to find $m_p$, the number of segments of $S$ visible from a query point $p$. We know that \textit{the visibility polygon of a given point $p\in \mathbb{B}$} is defined as
\begin{center}
$VP_{S}(p) = \{ q\in \mathbb{B}: p$ and $q$ are visible$\}$,  
\end{center} and \textit{the visibility polygon of a given segment $s_i$} is defined as
\begin{center}
$VP_{S}(s_i) = \bigcup_{q\in s_i}VP_{S} (q)$.
%=$\{$p\in \mathbb{R}^{2}: s_i$ and $q$ are visible (w.r.t $S$)$\}.$
\end{center}

Consider the $2n$ end-points of the segments of $S$ as vertices of a geometric graph. Add a straight-line-edge between each pair of visible vertices. The result is \textit{the visibility graph of $S$} or $VG(S)$.
We can extend each edge of $VG(S)$ in both directions to the points that the edge hits some segments in $S$ or the bounding box. This creates at most two new vertices and two new edges. Adding all these vertices and edges to $VG(S)$ results in a new geometric graph called \textit{the extended visibility graph of $S$} or $EVG(S)$.
$EVG(S)$ reflects all the visibility information from which the visibility polygon of any segment $s_i\in S$ can be computed \cite{gud}.

\subsection*{\it Related Work}
$VP_S(p)$ can be computed in $O(n\log n)$ time using $O(n)$ space~\cite{asa,sur}. Vegter proposed an output sensitive algorithm that reports $VP_S(p)$ in $O(|VP_S(p)|\log(\frac{n}{|VP_S(p)|}))$ time, by preprocessing the segments in $O(m\log n)$ time using $O(m)$ space, where $m=O(n^{2})$ is the number of edges of $VG(S)$ and $|VP_S(p)|$ is the number of vertices of $VP_S(p)$ \cite{vet}.

$EVG(S)$ can be used to solve VCP. $EVG(S)$ can optimally be computed in $O(n\log n+ m)$ time~\cite{gho1}. If a vertex is assigned to any intersection point of the edges of $EVG(S)$, we have a planar graph, which is called the planar arrangement of the edges of $EVG(S)$.
 All points in any face of this arrangement have the same number of visible segments and this number can be computed for each face in the preprocessing step \cite{gud}. Since there are $O(n^{4})$ faces in the planar arrangement of $EVG(S)$, a point location structure of size $O(n^{4})$ can answer each query in $O(\log n)$ time. But, $O(n^4)$ preprocessing time and space is high. We also know that for any query point $p$, by computing $VP_S(p)$, $m_p$ can be computed in $O(n\log n)$ with no preprocessing. This has led to several results with a tradeoff between the preprocessing cost and the query time~\cite{aro,bos,gho2,poc,zar}.
% (see the visibility book by Ghosh for a complete survey ~\cite{gho2}). 

There are two approximation algorithms for VCP by Fischer \textit{et al.}~\cite{fis1,fis2}.
One of these algorithms uses a data structure of size $O((m/r )^{2})$ to build a $(r/m)$-cutting for $EVG(S)$ by which the queries are answered in $O(\log n)$ time with an absolute error of $r$ compared to the exact answer ($1\leq r\leq n$). The second algorithm uses the random sampling method to build a data structure of size $O((m^{2}\log^{O(1)}n)/l)$ to answer any query in $O(l \log^{O(1)} n)$ time, where $1\leq l\leq n$. In the latter method, the answer of VCP is approximated up to an absolute value of $\delta n$ for any constant $\delta >0$ ($\delta$ affects the constant factor of both data structure size and the query time).

 In \cite{sur}, Suri and O'Rourke represent the visibility polygon of a segment by a union of set of triangles.
Gudmundsson and Morin~\cite{gud} improved the covering scheme of  \cite{sur}.
 Their method builds a data structure of size $O_{\epsilon}(m^{1+\alpha})=O_{\epsilon}(n^{2(1+\alpha)})$ in $O_{\epsilon}(m^{1+\alpha})=O_{\epsilon}(n^{2(1+\alpha)})$ preprocessing time, from which each query is answered in $O_{\epsilon}(m^{(1-\alpha)/2})=O_{\epsilon}(n^{1-\alpha})$ time, where $0<\alpha\leq1$.
This algorithm returns $m'_{p}$ such that $m_{p}\leq m'_{p}\leq 2m_{p}$. The same result can be achieved from \cite{ali} and \cite{nor}.
In \cite{ali}, it is proven that the number of visible end-points of the segments in $S$, denoted by $ve_p$, is a 2-approximation of $m_p$, that is $m_p\leq ve_p\leq 2m_p$. 

\subsection*{\it Our Results}In this paper, we present a randomized ($1+\delta$)-approximation algorithm, where $0<\delta\leq 1$. The expected preprocessing time and space of our algorithm are $O(m^{2-3\beta/2}\log m)$ and $O(m^{2-3\beta/2})$ respectively, and our query time is $O(\frac{1}{\delta^3}m^{\beta/2}\log m)$, where $0\leq \beta\leq \frac{2}{3}$ is chosen arbitrarily in the preprocessing time.

In our proposed algorithm, a graph $G(p)$ is associated to each query point $p$; the construction of $G(p)$ is explained in Section 2. It will be shown that $G(p)$ has a planar embedding and this formula holds: $m_p=n-F(G(p))+1$ or $n-F(G(p))+2$, where $F(G(p))$ is the number of faces of $G(p)$.

Using Euler's formula for planar graphs, we will show that if $p$ is inside a bounded face of $G(p)$, then $m_p=ve_p-C(G(p))+1$, otherwise $m_p=ve_p-C(G(p))$, where $C(G(p))$ is the number of connected components of $G(p)$. In Section 3 and 4, we will present algorithms to approximate $ve_p$ and $C(G(p))$. This leads to an overall approximation for $m_p$.
 
Some detail of our algorithm is as follows: First, we try to calculate $VP_S(p)$ by running the algorithm presented in \cite{vet} for $\frac{1}{\delta^3}m^{\beta/2}\log m$ steps. If this algorithm terminates, the exact value of $m_p$ is calculated, which is obviously less than $\frac{1}{\delta^3}m^{\beta/2}\log m$. Otherwise, our algorithm instead returns $m'_p$, such that $m_p\leq m'_p\leq (1+\delta)m_p$ with the probability of at least $1-\frac{1}{\log n}$. Table~\ref{com} compares the performance of our algorithm with the best known result for this problem. Note that if we choose a constant number $0<\delta<1$, then our query time is better than \cite{gud}, however our algorithm returns a $(1+\delta)$-approximation of the answer with a high probability.

\begin{table}[h]
\caption{Comparison of our method and the best known result for VCP. Note that $\beta$ ($0\leq \beta\leq \frac{2}{3}$) is chosen in the preprocessing time and $1+\delta$ ($0<\delta\leq1$) is the approximation factor of the algorithm which affects the query time and $O_{\epsilon}(f(n))=O(f(n)n^{\epsilon})$, where $\epsilon$ is a constant number that can be arbitrary small.}
\begin{center}
\begin{tabular}{|c|c|c|c|c|c|}
\hline
Reference&Preprocessing time& Space& Query& Approx-Factor \\ \hline
\cite{gud}&$O_{\epsilon}(m^{2-3\beta/2})$&$O_{\epsilon}(m^{2-3\beta/2})$&$O_{\epsilon}(m^{3\beta/4})$&2\\
\hline
Our result&$O(m^{2-3\beta/2}\log m)$&$O(m^{2-3\beta/2})$&$O(\frac{1}{\delta^3}m^{\beta/2}\log m)$&$1+\delta$\\
\hline
\end{tabular}
\end{center}
\label{com}
\end{table}

\begin{figure}[htpb]
	\centering
	\begin{tikzpicture}[scale=0.3]
%.....%-----------1
\draw(-10,10)--(-2,10);
\filldraw(-10,10) circle(2pt);
\filldraw(-2,10) circle(2pt);
\draw (-6,10) node[above] {$s_1$};

\draw (-10,10) node[above]{\tiny{$l(s_1)$}};
\draw (-2,10) node[above]{\tiny{$r(s_1)$}};

\draw(-12,8)--(-9,8.5);
\filldraw(-12,8) circle(2pt);
\filldraw(-9,8.5) circle(2pt);
\draw (-10.5,8.25) node[above] {$s_2$};

\draw (-12,8) node[above]{\tiny{$l(s_2)$}};
\draw (-9,8.5) node[above]{\tiny{$r(s_2)$}};

\draw(-10,6)--(-7,7);
\filldraw(-10,6) circle(2pt);
\filldraw(-7,7) circle(2pt);
\draw (-8.5,6.5) node[above] {$s_3$};

\draw (-10,6) node[above]{\tiny{$l(s_3)$}};
\draw (-7,7) node[above]{\tiny{$r(s_3)$}};

\draw(-5,8)--(-2,7.5);
\filldraw(-5,8) circle(2pt);
\filldraw(-2,7.5) circle(2pt);
\draw (-3.5,7.75) node[above] {$s_4$};

\draw (-5,8) node[above]{\tiny{$l(s_4)$}};
\draw (-2,7.5) node[above]{\tiny{$r(s_4)$}};

\draw(-6,6)--(-1,6.5);
\filldraw(-6,6) circle(2pt);
\filldraw(-1,6.5) circle(2pt);
\draw (-3.5,6.25) node[above] {$s_5$};

\draw (-6,6) node[above]{\tiny{$l(s_1)$}};
\draw (-1,6.5) node[above]{\tiny{$r(s_1)$}};

\filldraw(-5,-2) circle(2pt);
\draw (-5,-2) node[left] {$p$};
\draw (-5,-3) node[below]{(a)};

%\draw(-10,10)--(-5,-2)[dashed];
%\draw(-2,10)--(-5,-2)[dashed];
%\draw(-12,8)--(-5,-2)[dashed];
%\draw(-9,8.5)--(-5,-2)[dashed];
%\draw(-10,6)--(-5,-2)[dashed];
%\draw(-7,7)--(-5,-2)[dashed];
%\draw(-5,8)--(-5,-2)[dashed];
%\draw(-2,7.5)--(-5,-2)[dashed];
%\draw(-6,6)--(-5,-2)[dashed];
%\draw(-1,6.5)--(-5,-2)[dashed];

%2
\draw(4,10)--(12,10);
\filldraw(4,10) circle(2pt);
\filldraw(12,10) circle(2pt);
\draw (8,10) node[above] {$s_1$};

\draw(4,10)--(4.62,8.46);

\draw (4,10) node[above] {\tiny{$a$}};
\draw (4.82,8.46) node[above] {\tiny{$a'$}};

\filldraw(4.62,8.46) circle(2pt);
\draw(12,10)--(11.4,7.6);
\filldraw(11.4,7.6) circle(2pt);

\draw(2,8)--(5,8.5);
\filldraw(2,8) circle(2pt);
\filldraw(5,8.5) circle(2pt);
\draw (3.5,8.25) node[above] {$s_2$};

\draw(5,8.5)--(5.8,6.6);
\filldraw(5.8,6.6) circle(2pt);

\draw(4,6)--(7,7);
\filldraw(4,6) circle(2pt);
\filldraw(7,7) circle(2pt);
\draw (4.7,6.5) node[above] {$s_3$};

\draw(9,8)--(12,7.5);
\filldraw(9,8) circle(2pt);
\filldraw(12,7.5) circle(2pt);
\draw (10.5,7.75) node[above] {$s_4$};

\draw(9,8)--(9,6.1);
\filldraw(9,6.1) circle(2pt);

\draw(12,7.5)--(11.63,6.38);
\filldraw(11.63,6.38) circle(2pt);

\draw(8,6)--(13,6.5);
\filldraw(8,6) circle(2pt);
\filldraw(13,6.5) circle(2pt);
\draw (10.5,6.25) node[above] {$s_5$};

\filldraw(9,-2) circle(2pt);
\draw (9,-2) node[left] {$p$};
\draw (9,-3) node[below]{(b)};

%3
\draw(-10,-10)--(-2,-10)[dashed];
\filldraw(-10,-10) circle(2pt);
\filldraw(-2,-10) circle(2pt);
%\draw (-6,-10) node[above] {\tiny{$s_1$}};

\filldraw(-6,-10.2) circle(2pt);
\draw (-6,-10.2) node[below] {\tiny{$v_1$}};

\draw(-10,-10)--(-6,-10.2)[red];
\draw(-2,-10)--(-6,-10.2);

\draw(-10,-10)--(-9.38,-11.54)[red];
\filldraw(-9.38,-11.54) circle(2pt);
\draw(-2,-10)--(-2.6,-12.4);
\filldraw(-2.6,-12.4) circle(2pt);

\draw(-12,-12)--(-9,-11.5)[dashed];
\filldraw(-12,-12) circle(2pt);
\filldraw(-9,-11.5) circle(2pt);
%\draw (-10.5,-11.75) node[above] {\tiny{$s_2$}};

\filldraw(-10.5,-12) circle(2pt);
\draw (-10.5,-12) node[below] {\tiny{$v_2$}};

\draw(-9,-11.5)--(-10.5,-12);
\draw(-9.38,-11.54)--(-10.5,-12)[red];

\draw(-9,-11.5)--(-8.2,-13.4);
\filldraw(-8.2,-13.4) circle(2pt);

\draw(-10,-14)--(-7,-13)[dashed];
\filldraw(-10,-14) circle(2pt);
\filldraw(-7,-13) circle(2pt);
%\draw (-8.5,-13.5) node[above] {\tiny{$s_3$}};

\filldraw(-8.5,-13.75) circle(2pt);
\draw (-8.5,-13.75) node[below] {\tiny{$v_3$}};

\draw(-8.2,-13.4)--(-8.5,-13.75);

\draw(-5,-12)--(-2,-12.5)[dashed];
\filldraw(-5,-12) circle(2pt);
\filldraw(-2,-12.5) circle(2pt);
%\draw (-3.5,-12.25) node[above] {\tiny{$s_4$}};

\filldraw(-3.5,-12.5) circle(2pt);
\draw (-3.5,-12.5) node[below] {\tiny{$v_4$}};

\draw(-2.6,-12.4)--(-3.5,-12.5);
\draw(-5,-12)--(-3.5,-12.5);
\draw(-2,-12.5)--(-3.5,-12.5);

\draw(-5,-12)--(-5,-13.9);
\filldraw(-5,-13.9) circle(2pt);

\draw(-2,-12.5)--(-2.37,-13.62);
\filldraw(-2.37,-13.62) circle(2pt);

\draw(-6,-14)--(-1,-13.5)[dashed];
\filldraw(-6,-14) circle(2pt);
\filldraw(-1,-13.5) circle(2pt);
%\draw (-3.5,-13.75) node[above] {\tiny{$s_5$}};

\filldraw(-3.5,-14) circle(2pt);
\draw (-3.5,-14) node[below] {\tiny{$v_5$}};

\draw(-5,-13.9)--(-3.5,-14);
\draw(-2.37,-13.62)--(-3.5,-14);

\filldraw(-5,-22) circle(2pt);
\draw (-5,-22) node[left] {$p$};
\draw (-5,-23) node[below]{(c)};

%4----------------

\filldraw(8,-10.2) circle(2pt);
\draw (8,-10.2) node[below] {\tiny{$v_1$}};

\draw (3.5,-12) node[below] {\tiny{$v_2$}};
\filldraw(3.5,-12) circle(2pt);

\draw (5.5,-13.75) node[below] {\tiny{$v_3$}};
\filldraw(5.5,-13.75) circle(2pt);

\draw (10.5,-12.5) node[below] {\tiny{$v_4$}};
\filldraw(10.5,-12.5) circle(2pt);

\draw (10.5,-14) node[below] {\tiny{$v_5$}};
\filldraw(10.5,-14) circle(2pt);

\filldraw(9,-22) circle(2pt);
\draw (9,-22) node[left] {$p$};
\draw (9,-23) node[below]{(d)};

\draw(4,-10)--(8,-10.2);
\draw(12,-10)--(8,-10.2);

\draw(4,-10)--(4.62,-11.54);
%\filldraw(4.62,-11.54) circle(2pt);
\draw(12,-10)--(11.4,-12.4);
%\filldraw(11.4,-12.4) circle(2pt);

\draw(5,-11.5)--(3.5,-12);
\draw(4.62,-11.54)--(3.5,-12);

\draw(5,-11.5)--(5.8,-13.4);
%\filldraw(5.8,-13.4) circle(2pt);

\draw(5.8,-13.4)--(5.5,-13.75);

\draw(11.4,-12.4)--(10.5,-12.5);
\draw(9,-12)--(10.5,-12.5);
\draw(12,-12.5)--(10.5,-12.5);

\draw(9,-12)--(9,-13.9);
%\filldraw(9,-13.9) circle(2pt);

\draw(12,-12.5)--(11.63,-13.62);
%\filldraw(11.63,-13.62) circle(2pt);

\draw(9,-13.9)--(10.5,-14);
\draw(11.63,-13.62)--(10.5,-14);

	\end{tikzpicture}
\caption{
The steps to draw a planar embedding of $G(p)$. (a) The segments are $s_1,\dots, s_5$ with their left and right end-points and a given query point is $p$. (b) For each end-point $a\in s_i$ not visible to $p$, if $a'\in s_j$ such that $pr(a')=a$, we draw $\overline{aa'}$. (c) Put a vertex $v_i$ for each segment $s_i$ in a distance sufficiently close to the middle of $s_i$. For each $a$ and $a'$ (described in (b)), connect  $a$ to $v_i$ and $a'$ to $v_j$. This creats an edge between $v_i$ and $v_j$ shown in red (d) Remove the segments and the remaining is the planar embedding of $G(p)$. Note that the final embedding has $5$ vertices and $5$ edges and each edge is drown as $3$ consequence straight lines.}
\label{drawg}
\end{figure}
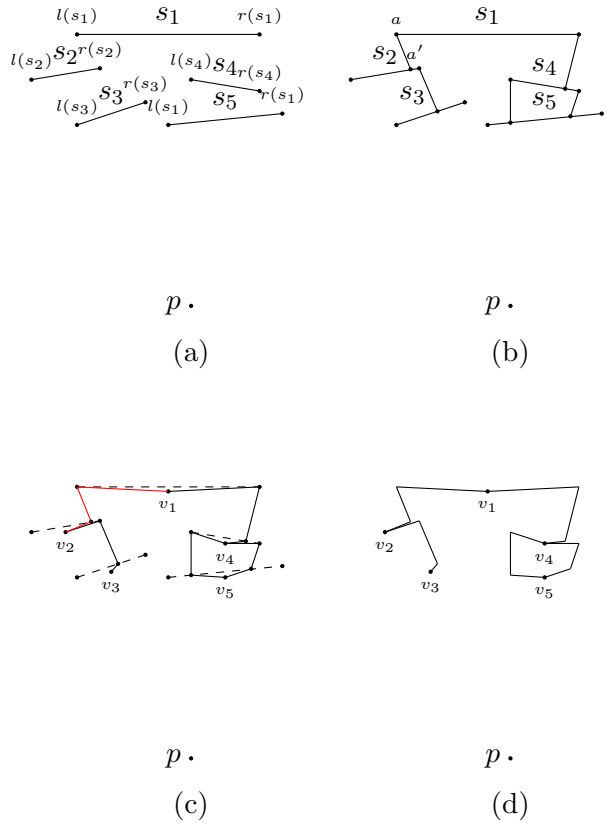

\section{Definitions and the main theorem}
For each point $a'\in s_i$, let $\overrightarrow{pa'}$ be the ray emanating from  the query point $p$ toward $a'$ and let $a=pr(a')$ be the first intersection point of $\overrightarrow{pa'}$ and a segment in $S$ or the bounding box right after touching $a'$. We say that $a=pr(a')$ is covered by $a'$ or the projection of $a'$ is $a$.
Also, suppose that $\overline{x'y'}$ is a subsegment of $s_i$ and $\overline{xy}$ is a subsegment of $s_j$, such that $pr(x')=x$ and $pr(y')=y$ and for any point $z'\in \overline{x'y'}$, $pr(z')\in \overline{xy}$, then we say that $\overline{xy}$ is covered by $\overline{x'y'}$.

For each query point $p$, we construct a graph denoted by $G(p)$ as follows: a vertex $v_i$ is associated to each  segment $s_i\in S$, and an edge $(v_i,v_j)$ is put if $s_j$  covers one end-point of $s_i$ (or vice-versa; that is, if $s_i$ covers one end-point of $s_j$). Obviously, there are two edges between $v_i$ and $v_j$, if $s_j$ (or $s_i$) covers both end-points of $s_i$ (or $s_j$). As an example, refer to Fig~\ref{drawg}.(a) and (d). Note that the bounding box is not considered here.

For any segment $s\in S$, let $l(s)$ and $r(s)$ be the first and second end-points of $s$, respectively swept by a ray around $p$ in clockwise order (Fig~\ref{drawg}.(a)).
%There are one or two edges between two vertices of $G(p)$ according to the property that one or both end-points of one of the segments associated to a vertex is covered by another segment. Notice that it is not possible that an end-point of each of these two segments is covered by the other segment. For an example of $G(p)$ refer to Fig~\ref{drawg}.

%\begin{figure}[htb]
%\begin{center}
%\includegraphics[width=0.6\textwidth]{a1.pdf}
%\caption{The graph of a given query point $p$, $G(p$).}
%\label{gp}
%\end{center}
%\end{figure}
\begin{lemma}
\label{planargp}
$G(p)$ has a planar embedding.
\end{lemma}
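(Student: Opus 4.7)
The plan is to exhibit an explicit planar drawing of $G(p)$, essentially the one illustrated in Figure 1(c)--(d), and then argue by cases that no two edge-curves cross except at shared vertex endpoints. I would place the vertex $v_i$ at a point in the interior of $s_i$ (close to its midpoint, but the precise location can be perturbed if needed). For each edge $(v_i,v_j)$ arising because some endpoint $a$ of $s_i$ is covered by a point $a'\in s_j$ (with $a=pr(a')$), I would draw the edge as the concatenation of three arcs $\gamma_1\cup\gamma_2\cup\gamma_3$: $\gamma_1$ runs along $s_i$ from $v_i$ to $a$, $\gamma_2$ is the sub-segment $\overline{aa'}$ lying on the ray from $p$ through $a'$, and $\gamma_3$ runs along $s_j$ from $a'$ to $v_j$. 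Double edges (when $s_j$ covers both endpoints of $s_i$) are drawn independently, one for each covered endpoint.

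The core of the proof is to check that these curves are pairwise non-crossing. I would split the analysis into three cases. (i) Two ray pieces $\gamma_2=\overline{aa'}$ and $\overline{bb'}$: if they lie on different rays from $p$, they can only meet at $p$, but $p$ does not belong to either piece since $a'$ and $b'$ are points on actual segments of $S$; if they lie on the same ray, by the definition of $pr(\cdot)$ the two pieces $\overline{aa'}$ and $\overline{bb'}$ are nested along that ray, and a small generic perturbation of $p$ ensures the involved endpoints are not collinear with $p$. (ii) A ray piece $\overline{aa'}$ crossing a segment piece on some $s_k$: the definition of the projection guarantees that the open segment $(a,a')$ meets no segment of $S$, since $a'$ is the first hit along the ray from $p$ and $a$ is the next one; hence such a crossing is impossible. (iii) Two segment pieces on the same $s_i$: all such pieces lie along $s_i$, so they can only overlap, not cross transversally. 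Overlaps are resolved by pushing each piece infinitesimally to one of the two sides of $s_i$, in an order consistent with how far along $s_i$ each piece extends; since the segments of $S$ are pairwise disjoint, $s_i$ has a thin neighborhood that meets no other segment, so the perturbation does not create any new intersection with the curves of cases (i)--(ii). This gives a valid planar embedding of $G(p)$.

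The main obstacle is really case (iii), together with the degenerate collinear sub-case of (i): in both the "natural" drawing along $s_i$ or along a common ray, edges can share portions of a curve rather than cross transversally, and one has to argue that these overlaps can be locally resolved by a nested perturbation without spoiling cases (i) and (ii). A clean way to package the resolution is to note that at each $v_i$ the curves emanate in a well-defined cyclic order (determined by which side of $v_i$ along $s_i$ they begin and by the order of their $\gamma_2$-directions as seen from $p$), and one draws them as thin nested arcs in a neighborhood of $s_i$ consistent with this cyclic order; the remainder of the argument is then routine.
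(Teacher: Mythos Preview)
Your proposal is correct and follows essentially the same construction as the paper: vertices near the midpoints of the segments, and each edge drawn as a three-piece polyline $v_i \to a \to a' \to v_j$ where $\overline{aa'}$ lies on a ray from $p$. The paper's proof is much terser than yours---after describing the same drawing it simply asserts ``Obviously, none of these edges intersect''---so your case analysis (i)--(iii) is in fact more detailed than what the paper provides. The one stylistic difference is that the paper places $v_i$ ``very close to'' (rather than exactly at) the midpoint of $s_i$, which is precisely the infinitesimal off-segment perturbation you invoke to resolve the overlaps in case~(iii); so your perturbation step and the paper's off-segment placement are the same device.
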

\begin{proof}
Here is the construction. For each end-point $a\in s_i$ not visible from $p$, let $a'\in s_j$ such that $pr(a')=a$. Draw the straight-line ${\overline{aa'}}$.
%For each end-point $a$ of a segment in $S$, choose the unique point $b$ on some other segment such that $pr(b)=a$ and $a$ is not on the bounding box. Draw the segment $\overline{ba}$
Doing this, we have a collection of non-intersecting straight-lines. For each $s_i$, we put a vertex $v_i$ located very close to the mid-point of $s_i$. Also, for each segment $\overline{aa'}$, we connect $a$ to $v_i$ and $a'$ to $v_j$. This creates an edge consisting of three consecutive straight-lines $\overline{v_ia}$, $\overline{aa'}$, and $\overline{a'v_j}$  that connects $v_i$ to $v_j$. Obviously, none of these edges intersect. Finally, all the original segments are removed. The remaining is the vertices and edges of a planar embedding of $G(p)$ (These steps can be seen in Fig~\ref{drawg}).

%Imagine the rays emanating from the end-points of segments toward $p$. These rays do not intersect each other except in $p$ (See Fig~\ref{drawg}.a).
%For each end-point $a\in s_i$ which is not visible from $p$, draw $r_{\overrightarrow{ap}}$ until it intersects the first segment $s_j$ at a point $a'\in s_j$ (See Fig~\ref{drawg}.b). Obviously, $\overline{aa'}$ does not intersect any segment of $S$ except $s_j$ and also it does not intersect  any ray emanating from the end-points toward $p$. In the next step, for any segment $s_i\in S$, put a vertex $v_i$ with a distance sufficiently small from the middle of $s_i$, then connect the intersection points of the rays and $s_i$ to $v_i$. Furthermore, connect each of the end-points of $s_i$ which are not visible to $p$ to $v_i$(See Fig~\ref{drawg}.c) Then remove the segments. Thus, we have a geometric planar embedding of $G(p)$ (See Fig~\ref{drawg}.d).

\end{proof}
%\begin{figure}[htb]
%\begin{center}
%\includegraphics[width=0.7\textwidth]{drawg1.pdf}
%\caption{The steps to draw a planar embedding of $G(p)$. a. Consider the rays emanating from the end-points toward $p$. b. Draw each ray emanating from an end-point that is not visible to $p$ until it intersects the first segment. c. Put a vertex for each segment in a distance sufficiently small from the middle of that segment and connect each of these rays to the related vertices. d. The planar embeding of $G(p)$. }
%\label{drawg}
%\end{center}
%\end{figure}
From now on, we use $G(p)$ as the planar embedding of the graph $G(p)$.
As we know the Euler's formula for any non-connected planar graph $G$ with multiple edges is:
\begin{center}
\label{eul}
$V(G)-E(G)+F(G)=1+C(G)$,
\end{center}
where $E(G)$, $V(G)$, $F(G)$, and $C(G)$ are the number of edges, vertices, faces, and connected components of $G$, respectively.
The following theorem provides a method to calculate $m_p$, using $G(p)$.
\begin{theorem}
\label{f}
The number of segments not visible from $p$ is equal to $F(G(p))-2$ if $p$ is inside a bounded face of $G(p)$, or is equal to $F(G(p))-1$, otherwise.
\end{theorem}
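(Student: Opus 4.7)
The plan is to combine Euler's formula for the planar embedding of $G(p)$ (provided by Lemma~\ref{planargp}) with a structural count of visible endpoints versus visible segments.

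The first step is to compute $E(G(p))$. Every non-visible endpoint $a\in s_i$ produces exactly one edge of $G(p)$---namely the edge $(v_i,v_j)$, where $s_j\ni a'$ with $pr(a')=a$---and conversely each edge of $G(p)$ arises from a unique non-visible endpoint. Since $S$ has $2n$ endpoints in total and $ve_p$ of them are visible, this gives $E(G(p))=2n-ve_p$. Substituting $V(G(p))=n$ and this formula into Euler's identity $V-E+F=1+C$ for the planar embedding yields
\[ F(G(p)) \;=\; 1+C(G(p))+n-ve_p. \]

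The crux of the proof is then to relate $ve_p$ and $m_p$ via the identity
\[ ve_p-m_p \;=\; C(G(p))-\varepsilon,\qquad \varepsilon=\begin{cases}1&\text{if $p$ lies in a bounded face of $G(p)$,}\\ 0&\text{otherwise.}\end{cases} \]
Substituting this into the previous display immediately rearranges to $n-m_p=F(G(p))-1-\varepsilon$, which is exactly $F(G(p))-1$ or $F(G(p))-2$ as claimed.

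To prove this $(ve_p,m_p)$-identity I plan to work component by component. A key structural fact I will need is that each connected component $K$ of $G(p)$ is closed under blocking: if a segment $s\in K$ is blocked by some segment $s'$, then $s'\in K$ as well. This holds because at every angular boundary between two consecutive blockers on the pencil of rays from $p$, one blocker's endpoint is covered by the other, creating an edge of $G(p)$ that links their vertices into the same component. Given this closure, visibility within $K$ is determined purely by segments of $K$, and I plan to show that $ve_K-m_K=1$ for each component in general, with one exceptional component---the one whose edges bound the face of $G(p)$ that contains $p$---satisfying $ve_K-m_K=0$ exactly when $p$ lies in a bounded face. Summing these per-component identities over all $C(G(p))$ components yields the global identity.

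The main obstacle will be the per-component count, and in particular verifying that the single exceptional component ``matches up'' bijectively with $p$ being inside a bounded face of the embedding. I expect this to require tracing the boundary of the face of $K$ that faces $p$ and charging each visible endpoint on that boundary against a visible segment, which leaves a single surplus endpoint per component except when the boundary closes up around $p$.
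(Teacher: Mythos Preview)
Your approach inverts the paper's logic: the paper proves Theorem~\ref{f} directly by constructing an explicit bijection between the segments not visible from $p$ and the bounded faces of $G(p)$ that do not contain $p$, and only afterwards derives the identity $m_p = ve_p - C(G(p)) + \varepsilon$ (the paper's Lemma~\ref{o2}) from it via Euler's formula. You instead propose to prove that identity first, component by component, and then read off the face count. This is a legitimate alternative strategy in principle, but as written it has a concrete gap.

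Your claim that each component $K$ is ``closed under blocking'' is false. Take a long segment $s$ and a short segment $s'$ sitting in front of the middle of $s$, positioned so that all four endpoints are visible from $p$. Then $s'$ blocks part of $s$, yet no endpoint of either segment is covered, $G(p)$ has no edges at all, and $s$ and $s'$ lie in different components. Your justification (``at every angular boundary between two consecutive blockers, one blocker's endpoint is covered by the other'') breaks down here: at the boundary of $s'$'s shadow the relevant endpoint of $s'$ is \emph{visible}, hence not covered, and contributes no edge. What \emph{is} true---and what you actually need---is weaker: if an \emph{endpoint} of a segment in $K$ is invisible then its immediate coverer lies in $K$ (immediate from the edge definition), and if a segment in $K$ is \emph{entirely} invisible then each of its immediate blockers lies in $K$ (this does follow from a transition argument, but only because the chain of blockers is anchored at the two covered endpoints of $s$). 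These two facts suffice to show that $ve_K$ and $m_K$ can be computed inside $K$, but they do not follow from the closure principle you state.

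Even granting that reduction, the per-component identity $ve_K - m_K = 1$ is the real content of the theorem and is not substantially easier than the paper's bijection. Your charging sketch (``one surplus endpoint per component'') implicitly assumes that each visible segment of $K$ contributes exactly one arc to the boundary of $K$ facing $p$; ruling out a segment contributing two such arcs---i.e., showing that the stuff blocking the middle of a $K$-segment must belong to a \emph{different} component---is essentially the geometric fact the paper's bijection encodes (and is closely related to the paper's later Lemma~\ref{seg}). So your route can be made to work, but the step you label the ``main obstacle'' is effectively the whole theorem, and the structural lemma you invoke to set it up is incorrect as stated.
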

%\begin{figure}
%\begin{center}
%\includegraphics[width=0.5\textwidth]{projection.pdf}
%\caption{The projection of  $S'=\{s'_{1},s'_{2},s'_{3},s'_{4},s'_{3}\}$ consecutively covers $s_i$ from $l(s_i)$ to $r(s_i)$.}
%\label{g}
%\end{center}
%\end{figure}

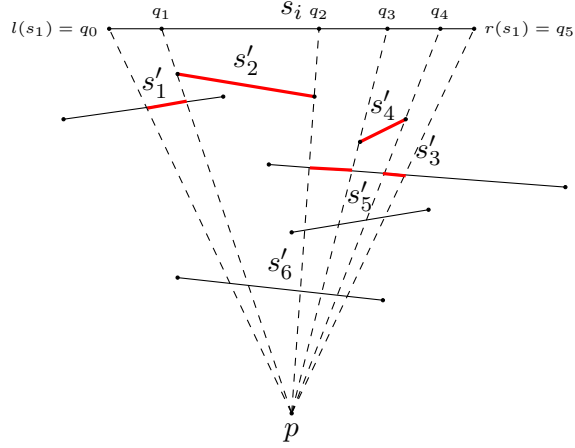
\begin{figure}[htpb]
	\centering
	\begin{tikzpicture}[scale=0.3]
%.....%-----------1
\draw(-8,10)--(8,10);
\filldraw(-8,10) circle(2pt);
\filldraw(8,10) circle(2pt);
\draw (0,10) node[above] {$s_i$};

\draw (-8,10) node[left]{\tiny{$l(s_1)=q_0$}};
\draw (8,10) node[right]{\tiny{$r(s_1)=q_5$}};

\draw(-10,6)--(-3,7);
\draw[very thick,red] (-6.33,6.49)--(-4.6,6.8);

\filldraw(-10,6) circle(2pt);
\filldraw(-3,7) circle(2pt);
\draw (-6,6.5) node[above] {$s'_1$};

\draw(0,-7)--(-8,10)[dashed];

\draw[very thick,red](-5,8)--(1,7);
\filldraw(-5,8) circle(2pt);
\filldraw(1,7) circle(2pt);
\draw (-2,7.5) node[above] {$s'_2$};

\draw(0,-7)--(-5.7,10)[dashed];
\filldraw(-5.7,10) circle(2pt);
\draw (-5.7,10) node[above]{\tiny{$q_1$}};

\draw(0,-7)--(1.2,10)[dashed];
\filldraw(1.2,10) circle(2pt);
\draw (1.2,10) node[above]{\tiny{$q_2$}};

\draw(0,-7)--(4.2,10)[dashed];
\filldraw(4.2,10) circle(2pt);
\draw (4.2,10) node[above]{\tiny{$q_3$}};

\draw(0,-7)--(6.52,10)[dashed];
\filldraw(6.52,10) circle(2pt);
\draw (6.52,10) node[above]{\tiny{$q_4$}};

\draw(0,-7)--(8,10)[dashed];
%\filldraw(8,10) circle(2pt);
%\draw (8,10) node[above]{\tiny{$q_5$}};

\draw(-1,4)--(12,3);
\draw[very thick,red](0.8,3.85)--(2.65,3.75);
\draw[very thick,red](4,3.6)--(5,3.5);
\filldraw(-1,4) circle(2pt);
\filldraw(12,3) circle(2pt);
\draw (6,3.5) node[above] {$s'_3$};

\draw[very thick,red](3,5)--(5,6);
\filldraw(3,5) circle(2pt);
\filldraw(5,6) circle(2pt);
\draw (4,5.5) node[above] {$s'_4$};

\draw(0,1)--(6,2);
\filldraw(0,1) circle(2pt);
\filldraw(6,2) circle(2pt);
\draw (3,1.5) node[above] {$s'_5$};

\draw(-5,-1)--(4,-2);
\filldraw(-5,-1) circle(2pt);
\filldraw(4,-2) circle(2pt);
\draw (-0.5,-1.5) node[above] {$s'_6$};

\filldraw (0,-7) circle(2pt);
\draw (0,-7) node[below]{$p$};

\end{tikzpicture}
\caption{$s_i$ is not visible from $p$. It can be partitioned into 5 subsegments $\overline{q_0q_1}, \overline{q_1q_2}, \overline{q_2q_3}, \overline{q_3q_4}$, and $\overline{q_4q_5}$, each is covered respectively by  subsegment of $s'_1,s'_2,s'_3,s'_4$, and $s'_3$ shown above.}
\label{g}
\end{figure}

\begin{proof}
We construct a bijection $\phi$ between the segments not visible from $p$ to the faces of $G(p)$ except the unbounded face and the face that contains $p$. This will compelete the proof of our theorem.

Suppose that $s_i$ is a segment not visible from $p$. Then, we can partition $s_i$ into $k$ subsegments, $\overline{q_0q_1}, \overline{q_1q_2}, \dots, \overline{q_{k-1}q_k} $ such that $q_0=l(s_i)$, $q_k=r(s_i)$, and for each $\overline{q_iq_{i+1}}$, there is a subsegment $\overline{q'_iq'_{i+1}}\in s_j$ that covers $\overline{q_iq_{i+1}}$. 
Let $s'_{1}, s'_{2},\dots, s'_{k}$ be the set of segments such that  $\overline{xy}\in s'_{i+1}$ covers $\overline{q_iq_{i+1}}$ (note that some segments may appear more than once in the above sequence) (Fig~\ref{g}).
We claim that the vertices $v_i, v'_{1}, v'_{2},\dots, v'_{k}$ form a bounded face of $G(p)$ that does not contain $p$. In $\phi$, we associate this face to $s_i$. Since $v'_1$ is the vertex associated to the first segment that covers $\overline{q_0q_1}$, $s'_1$ will cover $l(s_i)$ and hence $v_i$ is adjacent to $v'_1$. Similarly, since $s'_k$ covers $r(s_i)$, hence $v_i$ is adjacent to $v'_k$.
The next subsegment that covers a subsegment of $s_i$ comes from $s'_2$. This means that $r(s'_1)$ is covered by $s'_2$ or $l(s'_2)$ is covered by $s'_1$. This implies that $v'_1$ is adjacent to $v'_2$. Similarly, we can show that $v'_i$ is adjacent to $v'_{i+1}$ for all $1\leq i<k$. To complete the construction, we need to show that the closed path formed by $v_i\rightarrow v'_{1}\rightarrow v'_{2},\dots \rightarrow v'_{k}\rightarrow v_i$ is a bounded face not containing $p$. Consider a ray around $p$ in clockwise order. 
The area that this ray touches under $s_i$ and above $s'_1,\dots, s'_k$ is a region bounded by $v_i, v'_{1}, v'_{2},\dots,v'_{k}$. Obviously, $p$ is not inside this region.

Now, we show that our map $\phi$ is one-to-one and onto. The proof of one-to-oneness is easier. If $\phi(s_i)= \phi(s_j)$, then according to the construction of $\phi$, a subsegment of $s_i$ covers a subsegment of $s_j$ and a subsegment of $s_j$ covers a subsegment of $s_i$. This is a contradiction since these segments do not intersect. To prove the onto-ness, we need to show for any bounded face $f$ that does not contain $p$, there is a vertex $v_i$ corresponding to a segment $s_i$ that is not visible to $p$ such that $\phi(s_i)=f$.

To find $s_i$, we use the sweeping ray around $p$. Since $f$ is assumed to be bounded and not containing $p$, the face $f$ is between two rays from $p$; one from the left and the other from the right. If we start sweeping from left to right, there is a segment corresponding to the vertices of $f$ whose end-point is the first to be covered by the other segments corresponding to the vertices of $f$. We claim that $s_i$ is the desired segment .i.e. $s_i$ is not visible to $p$ and $\phi(s_i)=f$. For example in Fig~\ref{g}, the closed path $v_i\rightarrow v'_1\rightarrow v'_2\rightarrow v'_3\rightarrow v'_4,\rightarrow v'_3, \rightarrow v_i$ forms a face and $s_i$ is the first segment among $\{s_i, s'_1, s'_2, s'_3, s'_4\}$ such that $l(s_i)$ is covered by one of the segments in $\{s_i, s'_1, s'_2, s'_3, s'_4\}$.

Obviously, $l(s_i)$ is not visible from $p$. $v'_{1}$ is adjacent to $v_i$ which means that a subsegment of $s'_{1}$ covers a subsegment of $s_i$. Since $v'_{1}$ and $v'_{2}$ are adjacent, this means that a subsegment of $s'_{2}$ consecutively covers the next subsegment of $s_i$ right after $s'_{1}$. Continuing this procedure, we conclude that a subsegment of each $s'_{i}$ covers some subsegment of $s_i$ continuously right after $s'_{i-1}$. $v'_{k}$ and $v_i$ are also adjacent, so $r(s_i)$ is not visible from $p$.
We conclude that subsegments of $s'_1, s'_2 \dots, s_k$ completely cover $s_i$ and hence $s_i$ is not visible from $p$.

So, if $p$ is in the unbounded face of $G(p)$, the number of segments which are not visible from $p$ is $F(G(p))-1$, otherwise it is $F(G(p))-2$.
\end{proof} 
The Euler's formula is used to compute $F(G(p))$.
Obviously, $V(G(p))$ is $n$. For each end-point not visible from $p$, an edge is added to $G(p)$; therefore, $E(G(p))$ is $2n-ve_p$ ($ve_p$ was defined above as the number of visible end-points from $p$). The Euler's formula and Theorem~\ref{f} indicate the following lemma.  
 \begin{lemma}
 \label{o2}
 If $p$ is inside a bounded face of $G(p)$, then
 $m_p=ve_p-C(G(p))+1$, otherwise,
 $m_p=ve_p-C(G(p))$. 
 \end{lemma}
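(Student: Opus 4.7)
The plan is a short algebraic manipulation that combines Theorem~\ref{f} with Euler's formula applied to the planar embedding of $G(p)$ given by Lemma~\ref{planargp}. First I would gather the three ingredients supplied by the preceding material: (i) $V(G(p)) = n$, since $G(p)$ has exactly one vertex per segment of $S$; (ii) $E(G(p)) = 2n - ve_p$, since by construction each of the $2n - ve_p$ endpoints that fails to see $p$ contributes exactly one edge to $G(p)$, while visible endpoints contribute none; and (iii) the count of invisible segments equals $F(G(p)) - 2$ or $F(G(p)) - 1$ according to whether $p$ lies in a bounded face or the unbounded face of $G(p)$, by Theorem~\ref{f}.

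Next I would solve the Euler relation $V(G(p)) - E(G(p)) + F(G(p)) = 1 + C(G(p))$ for $F(G(p))$, obtaining
\begin{equation*}
F(G(p)) \;=\; 1 + C(G(p)) - n + (2n - ve_p) \;=\; 1 + C(G(p)) + n - ve_p .
\end{equation*}
Since $m_p = n - (\text{number of segments not visible from } p)$, substituting the two cases from Theorem~\ref{f} gives, when $p$ is inside a bounded face,
\begin{equation*}
m_p \;=\; n - \bigl(F(G(p)) - 2\bigr) \;=\; n - 1 - C(G(p)) - n + ve_p + 2 \;=\; ve_p - C(G(p)) + 1,
\end{equation*}
and when $p$ is in the unbounded face,
\begin{equation*}
m_p \;=\; n - \bigl(F(G(p)) - 1\bigr) \;=\; ve_p - C(G(p)),
\end{equation*}
as claimed.

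There is no real obstacle here; the only point that deserves a sanity check is the edge count $E(G(p)) = 2n - ve_p$, because the definition of $G(p)$ allows a double edge between $v_i$ and $v_j$ when both endpoints of $s_i$ are covered by $s_j$. Since Euler's formula as stated in the excerpt is the version that permits multiple edges, and each invisible endpoint is accounted for exactly once (whether or not its partner endpoint also produces a parallel edge), the bookkeeping is consistent. Once that is verified, the lemma follows immediately from the substitution above.
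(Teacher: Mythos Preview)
Your proposal is correct and is exactly the argument the paper indicates: the paper states $V(G(p))=n$, $E(G(p))=2n-ve_p$, and then simply says ``The Euler's formula and Theorem~\ref{f} indicate the following lemma,'' which is precisely the substitution you wrote out. Your remark about multi-edges is also consistent with the paper's explicit use of the multi-edge version of Euler's formula.
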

 
In the rest of this paper, two algorithms are presented; one to approximate $ve_p$ and the other to approximate $C(G(p))$. By applying Lemma~\ref{o2}, an approximation value of $m_p$ is calculated. The main result of this paper is thus derived from the following theorem. The proof is given in the Appendix. 
\begin{theorem}(Main theorem)
\label{maint}
 For any $0<\delta\leq1$ and $0\leq \beta \leq \frac{2}{3}$, VCP can be approximated in $O(\frac{1}{\delta^3}m^{\beta/2}\log m)$ query time using $O(m^{2-3\beta/2}\log m)$ expected preprocessing time and $O(m^{2-3\beta/2})$ expected space. This algorithm returns a value $m_p'$ such that with the probability at least $1-\frac{1}{\log m}$, $m_p\leq m_p'\leq (1+\delta) m_p$ when $m_p\geq \frac{1}{\delta^3}m^{\beta/2}\log m$ and returns the exact value when $m_p< \frac{1}{\delta^3}m^{\beta/2}\log m$.
\end{theorem}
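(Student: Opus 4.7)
The plan is to build the proof on the identity of Lemma~\ref{o2}, namely $m_p = ve_p - C(G(p)) + \iota_p$ with $\iota_p \in \{0,1\}$ according to whether $p$ lies in a bounded face, so that any query procedure reduces to (a) locating $p$ relative to the faces of $G(p)$, (b) estimating $ve_p$, and (c) estimating $C(G(p))$, with a separate fallback for the regime in which $m_p$ is so small that direct computation beats the approximators.

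At query time, I would first run Vegter's output-sensitive algorithm from~\cite{vet} with a hard budget of $T := \frac{1}{\delta^3} m^{\beta/2}\log m$ primitive steps. If it terminates within this budget, $VP_S(p)$ has been computed, $m_p$ is read off exactly, and necessarily $m_p < T$, discharging the ``exact when $m_p$ is small'' clause of the theorem. Otherwise we certify $m_p \geq T$ and invoke the two randomized subroutines promised in Sections~3 and~4: one returns $ve'_p$ satisfying $ve_p \leq ve'_p \leq (1+\delta')ve_p$, and the other returns an estimator $C'$ of $C(G(p))$. Combined with a point-location query in the preprocessed arrangement to recover $\iota_p$, we output $m'_p := ve'_p - C' + \iota_p$ (possibly shifted by a small additive constant to enforce one-sidedness). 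The preprocessing and space bounds follow by taking the maximum of the two subroutines' budgets, both engineered to be $O(m^{2-3\beta/2}\log m)$ in expected time and $O(m^{2-3\beta/2})$ in expected space; the query time is the sum of the $T$-step Vegter budget and the two sub-query costs, each $O(T)$.

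The correctness proof reduces to controlling additive error. The key a priori bound is $ve_p \leq 2m_p$ from~\cite{ali}, so that once we arrange both $|ve'_p - ve_p| \leq \tfrac{\delta}{3}\, m_p$ and $|C' - C(G(p))| \leq \tfrac{\delta}{3}\, m_p$, each with probability at least $1 - \tfrac{1}{2\log m}$, a union bound and the triangle inequality yield $|m'_p - m_p| \leq \delta\, m_p$ with overall probability at least $1 - \tfrac{1}{\log m}$. The $\pm 1$ combinatorial slack from $\iota_p$ is harmless since $m_p \geq T \gg 1$ in this branch, and the internal parameter $\delta'$ is rescaled by a constant to absorb all multiplicative-versus-additive conversions.

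The main obstacle is step~(c): since $C(G(p))$ enters the formula with a minus sign, what we need is an error that scales additively with $m_p$, not a multiplicative approximation of $C(G(p))$ itself, which could be as large as $n$ and therefore useless inside the subtraction. Verifying that the subroutine of Section~4 delivers $|C' - C(G(p))| = O(\delta\, m_p)$ within the claimed $O(T)$ query budget, and that its random bits cause failure with probability at most $\tfrac{1}{2\log m}$, is the technically delicate part; the remaining steps, including the calibration of $\delta'$ and the point-location that identifies $\iota_p$, are routine once this additive guarantee is in hand.
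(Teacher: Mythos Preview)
Your overall architecture---run Vegter's algorithm on a budget, then combine estimators for $ve_p$ and $C(G(p))$ via Lemma~\ref{o2}---matches the paper. But your diagnosis of the key obstacle is off, and the paper resolves it differently from what you anticipate.

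You write that a multiplicative approximation of $C(G(p))$ would be ``useless inside the subtraction'' because $C(G(p))$ ``could be as large as $n$''. In fact Lemma~\ref{o2} together with $ve_p\le 2m_p$ forces $C(G(p))\le m_p+1$, so a $(1\pm\delta)$-estimate of $C(G(p))$ already yields additive error $O(\delta m_p)$---exactly what you want. The actual difficulty is the \emph{opposite} regime: Lemma~\ref{approx-components} only certifies its multiplicative bound when $C(G(p))>\frac{1}{\delta^2}m^{\beta/2}\log m$; below that threshold there is no guarantee at all. The paper handles this by a case split on the \emph{observed} value $C'_p$. When $C'_p$ is small it argues (via Chebyshev) that with high probability $C(G(p))$ itself is $O(\delta m_p)$, drops it entirely, and outputs $ve'_p/(1-\delta)$. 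When $C'_p$ is large it infers that $C(G(p))$ exceeds the threshold, applies Lemma~\ref{approx-components}, and outputs $ve'_p/(1-\delta)-C'_p/(1+\delta)$. Your single-formula output $ve'_p-C'+\iota_p$ avoids the split, but then you owe a direct proof that $|C'-C(G(p))|=O(\delta m_p)$ holds uniformly; this \emph{is} obtainable by rerunning Chebyshev with $\varepsilon=\Theta(\delta m_p)$ and plugging in $C(G(p))\le m_p$, but you have not done it, and your stated reason for the difficulty points in the wrong direction.

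Two smaller points. The ``point-location query in the preprocessed arrangement to recover $\iota_p$'' cannot work as stated: $G(p)$ depends on the query point, so there is no fixed arrangement to preprocess. As you yourself note, $\iota_p\in\{0,1\}$ is absorbed in the error once $m_p\ge T$, so just drop that step. And enforcing the one-sided guarantee $m_p\le m'_p$ requires a multiplicative rescaling (the paper's $1/(1-\delta)$ and $1/(1+\delta)$ factors), not the ``small additive constant'' you propose.
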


%\section{Proof of Lemma \ref{approx-end-point}}
\section{An approximation algorithm to compute the number of visible end-points}
In this section, we present an algorithm to approximate $ve_p$, the number of visible end-points. In the preprocessing phase, we build the data structure of the algorithm presented in \cite{vet} which calculates $VP_S(p)$ in $O(|VP_S(p)|\log(n/|VP_S(p)|))$ time, where $|VP_S(p)|$ is the number of vertices of $VP_S(p)$.
In \cite{vet}, the algorithm for computing $VP_S(p)$, consists of a rotational sweep of a line around $p$. During the sweep, the subsegments visible from $p$ along the sweep-line are collected.
In the preprocessing phase, we choose a fixed parameter $\beta$, where $0\leq \beta \leq \frac{2}{3}$. In the query time we also choose a fixed parameter $0<\delta \leq 1$ which is the value of approximation factor of the algorithm.

We use the algorithm presented in \cite{vet} to find the visible end-points, but for any query point, we stop the algorithm if more than $ \frac{2}{\delta^3}m^{\beta/2}\log m$ of the visible end-points are found. 

If  the sweep line completely sweeps around $p$ before counting $\frac{1}{\delta^3}m^{\beta/2}\log m$ of the visible end-points, then we have completely computed $VP_S(p)$ and we have $|VP_S(p)|\leq  \frac{2}{\delta^3}m^{\beta/2}\log m$.
In this case, the number of visible segments can be calculated exactly in $O( \frac{1}{\delta^3}m^{\beta/2}\log m)$ time. Otherwise, $ve_p> \frac{2}{\delta^3}m^{\beta/2}\log m$ and the answer is calculated in the next step of algorithm, that we now explain.

The visibility polygon of an end-point $a$ is a star shaped polygon consisting of $m_a=O(n)$ non-overlapping triangles \cite{asa,sur}, which are called \textit{the visibility triangles of $a$} denoted by $VT_S(a)$. Notice that $m_a$ is the number of edges of $EVG(S)$ incident to $a$.
The query point $p$ is visible to an end-point $a$, if and only if it lies inside one of the visibility triangles of $a$. Let $VT_S$ be the set of  visibility triangles of all the end-points of the segments in $S$. Then, the number of visible end-points from $p$ is the number of triangles in $VT_S$ containing $p$.  We can construct $VT_S$ in $O(m\log m)=O(n^2\log n)$ time using $EVG(S)$ and $|VT_S|=O(m)=O(n^2)$\cite{gud}.

We can preprocess a given set of triangles using the following lemma to count the number of triangles containing any query point.
\begin{lemma}
\label{arr}
Let $\Delta$ be a set of $n$ triangles. There exists a data structure of size $O(n^2)$, such that in the preprocessing time of $O(n^2\log n)$,  the number of triangles containing a query point $p$ can be calculated in $O(\log n)$ time.
\end{lemma}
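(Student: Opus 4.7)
The plan is to build the arrangement $\mathcal{A}(\Delta)$ of the $3n$ line segments that form the sides of the triangles in $\Delta$, attach to every face the number of triangles containing it, and then use a planar point-location structure on top of this annotated subdivision. The correctness rests on the obvious invariant that any two points lying in the same face of $\mathcal{A}(\Delta)$ are contained in exactly the same subset of triangles, so the count is a well-defined function $c:\mathcal{F}\to \mathbb{Z}_{\geq 0}$ on the face set $\mathcal{F}$ of $\mathcal{A}(\Delta)$; a query is answered by locating the face of $p$ and returning its stored count.

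First I would construct $\mathcal{A}(\Delta)$ together with a DCEL-style representation. Since the arrangement of $m=3n$ line segments has combinatorial complexity $O(m^2)=O(n^2)$ and can be built in $O(n^2\log n)$ time (for instance by an incremental sweep), this fits within the claimed bounds for both time and space. Next, I would compute $c(f)$ for every face by a breadth-first traversal of the dual graph of $\mathcal{A}(\Delta)$, starting from the unbounded face where $c=0$. The key observation driving this step is that when we cross a single arrangement edge $e$ separating a face $f$ from an adjacent face $f'$, the edge belongs to the boundary of exactly one triangle $T\in\Delta$ (interior overlaps of triangle boundaries cause no issue because $T$ is still uniquely determined on a given half-edge), and crossing $e$ toggles membership in $T$ only: so $c(f')=c(f)\pm 1$ depending on whether we step into or out of $T$, which is decided locally in $O(1)$ from the orientation recorded on the DCEL half-edge. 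Since every half-edge is traversed a constant number of times, the labelling phase runs in $O(n^2)$ time in total.

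Finally, I would plug the resulting annotated planar subdivision into a standard $O(\log n)$-query point-location data structure (Kirkpatrick's hierarchy, or the Edelsbrunner--Guibas--Stolfi persistent search tree), which on a subdivision of size $O(n^2)$ requires $O(n^2)$ space and $O(n^2\log n)$ preprocessing. A query for the point $p$ then locates the face $f_p$ containing $p$ in $O(\log n)$ time and returns $c(f_p)$, which by the invariant above equals $|\{T\in\Delta : p\in T\}|$.

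The only delicate step is the BFS labelling: one has to be careful about how to handle degeneracies when several triangle sides meet at a single vertex of $\mathcal{A}(\Delta)$ or overlap collinearly, since naively following half-edges around such a vertex could corrupt the $\pm 1$ bookkeeping. The standard remedy, which I would adopt, is symbolic perturbation of the segment endpoints (or, equivalently, a general-position assumption together with a separate one-time accounting for coincident features); this preserves the complexity bounds and ensures that the local update rule $c(f')=c(f)\pm 1$ is always valid, thereby giving the claimed $O(n^2)$ space, $O(n^2\log n)$ preprocessing, and $O(\log n)$ query time.
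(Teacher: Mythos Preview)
Your proposal is correct and follows essentially the same approach as the paper: build the arrangement of the triangle edges, label each face with its containment count (the paper likewise notes that adjacent faces differ by exactly~$1$, which is your BFS update rule), and answer queries with Kirkpatrick's point-location structure. You simply spell out in more detail what the paper leaves implicit.
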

\begin{proof}
Consider the planar arrangement of the edges of the triangles in $\Delta$ as a planar graph. 
Let $f$ be a face of this graph. Then, for any pair of  points $p$ and $q$ in $f$, the number of triangles containing $p$ and $q$ are equal. Therefore, we can compute these numbers for each face in a preprocessing phase and then, for any query point locate the face containing that point. There are $O(n^{2})$ faces in the planar arrangement of $\Delta$, so a point location structure of size $O(n^{2})$ can answer each query in $O(\log n)$ time\cite{kirk}. 
Note that the number of triangles containing a query point differs in $1$ for  any pair of adjacent faces.
\end{proof}

\subsection{The algorithm}
Here, we present an algorithm to approximate $ve_p$. We use this algorithm when $m_p>  \frac{1}{\delta^3}m^{\beta/2}\log m$.
In the preprocessing phase we take a random subset $RVT_1\subset VT_S$ such that each member of $VT_S$ is chosen with the probability of $\frac{1}{m^{\beta}}$.
\begin{lemma}
\label{exp}
$E(|RVT_1|)=O(m^{1-\beta})$.
\end{lemma}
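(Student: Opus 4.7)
The plan is to derive this bound by a direct application of linearity of expectation, since $RVT_1$ is obtained by independently including each triangle of $VT_S$ with probability $1/m^{\beta}$.

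First I would introduce indicator random variables $X_t$ for each $t \in VT_S$, where $X_t = 1$ if $t$ is selected into $RVT_1$ and $X_t = 0$ otherwise, so that $|RVT_1| = \sum_{t \in VT_S} X_t$. By construction $E(X_t) = \Pr[X_t = 1] = 1/m^{\beta}$. Linearity of expectation then gives
\[
E(|RVT_1|) = \sum_{t \in VT_S} E(X_t) = |VT_S| \cdot \frac{1}{m^{\beta}}.
\]

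The final step is to invoke the bound $|VT_S| = O(m)$, which was established earlier in the section from the fact that $VT_S$ is built out of $EVG(S)$ and contains one triangle for each edge of $EVG(S)$ incident to each end-point. Substituting this bound yields $E(|RVT_1|) = O(m) \cdot m^{-\beta} = O(m^{1-\beta})$, as claimed. There is no real obstacle here: the only thing to be careful about is that the constant hidden in the $O(m)$ bound on $|VT_S|$ is inherited from the previously cited result \cite{gud}, so the same constant propagates into the final asymptotic bound without any further argument.
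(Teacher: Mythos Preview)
Your proof is correct and essentially identical to the paper's own argument: both introduce indicator variables for each triangle in $VT_S$, apply linearity of expectation, and then use $|VT_S|=O(m)$ to conclude. There is nothing to add.
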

\begin{proof}
Let $VT_S=\{\Delta_1,\Delta_2,\dots,\Delta_{m'}\}$, where $m'=O(m)=O(n^2)$ and $X_i=1$ if $\Delta_i\in RTV_1$, and $X_i=0$ otherwise. We have,
\begin{center}
 $E(|RVT_1|)=E(\sum^{m'}_{i=1} X_i)=\sum^{m'}_{i=1} E(X_i)=\sum^{m'}_{i=1} \frac{1}{m^{\beta}}=\frac{m'}{m^{\beta}}=O(m^{1-\beta})$.
 \end{center}
\end{proof}
Suppose that in the preprocessing time, we choose $m^{\beta/2}$ independent random subsets  $RVT_1,\dots, RVT_{m^{\beta/2}}$ of $VT_S$. Using Lemma~\ref{arr}, for any query point  $p$, the number of triangles of each $RVT_i$ containing $p$ denoted by $(ve_p)_i$, is calculated in $O(\log m)$ time by $O(m^{2-2\beta}\log m)$ expected preprocessing time and $O(m^{2-2\beta})$ expected space. Then, $ve'_p=m^{\beta}\frac {\sum^{m^{\beta/2}}_{i=1} (ve_p)_i} {m^{\beta/2}}$ is returned as the approximation value of $ve_p$.
%We show that with a high probability close to 1, $ve'_p$ is a good approximation of $ve_p$.

\subsection{Analysis of approximation factor}
In this section the approximation factor of the algorithm is calculated. Let $X_i=m^{\beta}(ve_p)_i$. 
\begin{lemma}
$E(X_i)=ve_p$.
\end{lemma}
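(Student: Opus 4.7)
The plan is a direct application of linearity of expectation, parallel to the computation in Lemma~\ref{exp}. The key observation to establish first is that $ve_p$ equals exactly the number of triangles of $VT_S$ that contain $p$: by construction, the visibility triangles of a single end-point $a$ are non-overlapping and cover precisely $VP_S(a)$, so $p$ lies in one of the triangles of $a$ if and only if $a$ is visible from $p$, and in that case in exactly one such triangle. Summing over all $2n$ end-points, the number of triangles of $VT_S$ containing $p$ equals $ve_p$.

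Next, I would introduce indicator variables for the random subset. Enumerate $VT_S = \{\Delta_1, \ldots, \Delta_{m'}\}$ as in the proof of Lemma~\ref{exp}, and let $Z_j = 1$ if $\Delta_j \in RVT_i$ and $Z_j = 0$ otherwise, so that $\Pr[Z_j = 1] = 1/m^\beta$. Let $I \subseteq \{1, \ldots, m'\}$ be the (deterministic, depending only on $p$) set of indices of triangles in $VT_S$ containing $p$, so $|I| = ve_p$ by the previous paragraph. Then
\begin{equation*}
(ve_p)_i = \sum_{j \in I} Z_j.
\end{equation*}

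Applying linearity of expectation,
\begin{equation*}
E[(ve_p)_i] = \sum_{j \in I} E[Z_j] = \sum_{j \in I} \frac{1}{m^\beta} = \frac{ve_p}{m^\beta},
\end{equation*}
and therefore $E[X_i] = m^\beta \cdot E[(ve_p)_i] = ve_p$, as claimed. There is no real obstacle here; the only content beyond a one-line expectation computation is the identification of $ve_p$ with the count of triangles of $VT_S$ hit by $p$, which was already used implicitly when the algorithm was described.
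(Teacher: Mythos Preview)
Your proof is correct and essentially identical to the paper's: both introduce indicator variables for the triangles of $VT_S$ that contain $p$, use linearity of expectation to get $E[(ve_p)_i]=ve_p/m^{\beta}$, and multiply by $m^{\beta}$. The identification of $ve_p$ with the number of triangles of $VT_S$ containing $p$ that you take care to justify is already stated in the paper's text preceding the algorithm, so the paper's proof simply takes it for granted.
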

\begin{proof}
Suppose that $VT(p)=\{\Delta'_{1},\Delta'_{2},\dots, \Delta'_{{ve_{p}}}\}\subset VT_S$ be the set of all triangles containing $p$. Let $Y_j=1$ if $\Delta'_{j}\in RVT_i$, and $Y_j=0$ otherwise. So, $(ve_p)_i=\sum^{ve_{p}}_{j=1} Y_j$ and $E((ve_p)_i)=E(\sum^{ve_{p}}_{j=1} Y_j)=\frac{ve_p}{m^{\beta}}$.
$E(X_i)=E(m^{\beta}(ve_p)_i)=m^{\beta}E((ve_p)_i)=m^{\beta}\frac{ve_p}{m^{\beta}}=ve_p$.
\end{proof}
In addition, we can conclude the following lemma:
\begin{lemma}
$E(\frac{\sum^{m^{\beta/2}}_{i=1} X_i} {m^{\beta/2}})=ve_p$.
\end{lemma}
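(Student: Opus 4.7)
The plan is to derive this lemma as an immediate consequence of the preceding one via linearity of expectation. Since the previous lemma establishes that $E(X_i) = ve_p$ for each $i$, the sum $\sum_{i=1}^{m^{\beta/2}} X_i$ is a sum of identically distributed random variables (they are in fact i.i.d.\ since $RVT_1, \dots, RVT_{m^{\beta/2}}$ were drawn independently, though independence is not even needed here).

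First, I would invoke linearity of expectation to pull the summation outside the expectation operator, and I would also pull the deterministic constant $1/m^{\beta/2}$ out. This gives
\[
E\!\left(\frac{\sum_{i=1}^{m^{\beta/2}} X_i}{m^{\beta/2}}\right) = \frac{1}{m^{\beta/2}} \sum_{i=1}^{m^{\beta/2}} E(X_i).
\]
Then I would substitute $E(X_i) = ve_p$ from the preceding lemma, so the sum becomes $m^{\beta/2} \cdot ve_p$, and the factor of $1/m^{\beta/2}$ cancels, yielding $ve_p$ as required.

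There is no real obstacle here: linearity of expectation does not require any independence assumption, and the previous lemma has already done the substantive work of computing the expectation of a single $X_i$. The only thing worth emphasizing in the writeup is that $m^{\beta/2}$ is treated as a deterministic quantity (chosen in preprocessing), so it commutes with the expectation. The lemma is essentially a bookkeeping step preparing the ground for a concentration argument (presumably a Chernoff/Chebyshev bound in the subsequent section) that will control the deviation of the empirical average from its mean $ve_p$.
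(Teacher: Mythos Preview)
Your proposal is correct and matches the paper's approach: the paper simply states this lemma as an immediate consequence of the preceding one (``In addition, we can conclude the following lemma''), without even writing out the linearity-of-expectation step that you spell out. Your write-up is if anything more detailed than the paper's treatment.
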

So, $X_1,X_2,\dots, X_{m^{\beta/2}}$ are random variables with $E(X_i)=ve_p$. According to Chebyshev's Lemma the following lemma holds
\begin{lemma}
(Chebyshev's Lemma)
\label{cheb}
Given $X_1, X_2, \dots, X_n$ sequence of i.i.d.'s random variables with finite expected value $E(X_1) = E(X_2) = \dots = \mu$, we have,
\begin{center} 
$P((|\frac{X_1+\dots +X_n}{n}-\mu|)>\varepsilon_1 )\leq \frac{Var(X)}{n{\varepsilon_1}^2}$.
\end{center}
\end{lemma}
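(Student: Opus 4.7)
The plan is to derive this bound as a direct consequence of the one-variable Chebyshev inequality applied to the sample mean $\bar X_n := (X_1+\cdots+X_n)/n$. First I would recall the standard form of Chebyshev's inequality: for any random variable $Y$ with finite variance, $P(|Y-E(Y)|>\varepsilon)\leq \operatorname{Var}(Y)/\varepsilon^{2}$. This itself is a one-line consequence of Markov's inequality applied to the nonnegative variable $(Y-E(Y))^{2}$, and I would only sketch that derivation if the reader is not expected to take the one-variable form for granted.

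Next I would compute the mean and variance of $\bar X_n$ in terms of $\mu$ and $\operatorname{Var}(X)$. Linearity of expectation immediately yields $E(\bar X_n)=\mu$, using only the hypothesis $E(X_i)=\mu$ for each $i$. For the variance, the i.i.d. hypothesis is what does the work: independence of the $X_i$ turns the variance of their sum into a sum of variances, $\operatorname{Var}(X_1+\cdots+X_n)=n\operatorname{Var}(X)$, and then the scaling identity $\operatorname{Var}(aY)=a^{2}\operatorname{Var}(Y)$ with $a=1/n$ gives $\operatorname{Var}(\bar X_n)=\operatorname{Var}(X)/n$.

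Finally, substituting $Y=\bar X_n$ and $\varepsilon=\varepsilon_1$ into the one-variable Chebyshev inequality gives
\[
P\!\left(\left|\frac{X_1+\cdots+X_n}{n}-\mu\right|>\varepsilon_1\right)\leq \frac{\operatorname{Var}(\bar X_n)}{\varepsilon_1^{2}}=\frac{\operatorname{Var}(X)}{n\,\varepsilon_1^{2}},
\]
which is exactly the stated bound. There is no genuine obstacle: the lemma is a textbook corollary of Chebyshev's inequality, so the "hard" step is really only presentational, namely being careful to point out that the variance-of-sum identity requires (at minimum) pairwise uncorrelatedness, which is supplied for free by the i.i.d. hypothesis in the statement.
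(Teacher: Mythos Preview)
Your proof is correct and is the standard textbook derivation. The paper itself does not prove this lemma at all; it is simply stated as the classical Chebyshev inequality and then invoked as a tool, so there is nothing to compare against.
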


\begin{lemma}
With a probability at least $1-\frac{1}{\log m}$ we have,
\begin{center}
$(1-\delta)ve_p\leq ve'_p \leq(1+\delta)ve_p$.
\end{center}
\end{lemma}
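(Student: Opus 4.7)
The plan is to apply Chebyshev's inequality (Lemma~\ref{cheb}) to the random variables $X_1, \ldots, X_{m^{\beta/2}}$ with deviation threshold $\varepsilon_1 = \delta\, ve_p$, since the event $|ve'_p - ve_p| > \delta\, ve_p$ is exactly the complement of the desired two-sided estimate.

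The first step is to bound $\mathrm{Var}(X_i)$. Writing $(ve_p)_i = \sum_{j=1}^{ve_p} Y_j$, where each $Y_j$ is an independent $\mathrm{Bernoulli}(1/m^\beta)$ variable indicating whether the $j$-th triangle of $VT(p)$ was placed in $RVT_i$, one gets
$$\mathrm{Var}(X_i) = m^{2\beta}\,\mathrm{Var}((ve_p)_i) = m^{2\beta}\cdot ve_p\cdot \frac{1}{m^\beta}\Bigl(1 - \frac{1}{m^\beta}\Bigr) \leq m^\beta\, ve_p.$$
Since the $X_i$'s are i.i.d.\ with common mean $ve_p$, Chebyshev's inequality then yields
$$P\bigl(|ve'_p - ve_p| > \delta\, ve_p\bigr) \leq \frac{\mathrm{Var}(X_1)}{m^{\beta/2}\,\delta^2\, ve_p^{\,2}} \leq \frac{m^\beta\, ve_p}{m^{\beta/2}\,\delta^2\, ve_p^{\,2}} = \frac{m^{\beta/2}}{\delta^2\, ve_p}.$$

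To finish, I would invoke the hypothesis that this estimator is only run when $ve_p \geq \frac{1}{\delta^3}m^{\beta/2}\log m$, which follows from $ve_p \geq m_p$ together with the branch condition $m_p \geq \frac{1}{\delta^3}m^{\beta/2}\log m$ under which the sweep-based exact computation was abandoned. Substituting gives
$$\frac{m^{\beta/2}}{\delta^2\, ve_p} \leq \frac{m^{\beta/2}\,\delta^3}{\delta^2\, m^{\beta/2}\log m} = \frac{\delta}{\log m} \leq \frac{1}{\log m},$$
using $\delta \leq 1$ in the final inequality. This is exactly the failure probability claimed by the lemma.

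The main delicacy is that both sides of Chebyshev depend on $ve_p$, so the tail bound is only useful once $ve_p$ is sufficiently large; the threshold $\frac{1}{\delta^3}m^{\beta/2}\log m$ in the algorithm is calibrated precisely so that the ratio $\frac{m^{\beta/2}}{\delta^2\, ve_p}$ is forced down to $\frac{1}{\log m}$. Had the cutoff been smaller, Chebyshev alone would not suffice and one would have to replace it with a stronger concentration inequality (e.g.\ Chernoff) or increase the number of samples $m^{\beta/2}$, which would degrade the query time.
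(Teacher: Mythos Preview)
Your proof is correct and follows essentially the same approach as the paper: apply Chebyshev with $\varepsilon_1=\delta\,ve_p$, compute $\mathrm{Var}(X_i)=m^{2\beta}ve_p\,\frac{1}{m^\beta}(1-\frac{1}{m^\beta})\leq m^\beta ve_p$, and then invoke the lower bound on $ve_p$ guaranteed by the branch condition to push the failure probability below $\frac{1}{\log m}$. The only cosmetic difference is that the paper cites the slightly weaker bound $ve_p\geq \frac{1}{\delta^2}m^{\beta/2}\log m$ (which already suffices to get exactly $\frac{1}{\log m}$), whereas you use $ve_p\geq \frac{1}{\delta^3}m^{\beta/2}\log m$ via $ve_p\geq m_p$ and obtain the stronger intermediate bound $\frac{\delta}{\log m}$; both are valid.
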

\begin{proof}
Using Lemma~\ref{cheb},  we choose $\varepsilon_1=\delta ve_p$. Here, $\delta$ indicates the approximation factor of the algorithm. Obviously, $Var(X_i)=m^{2\beta}(ve_p)(1-\frac{1}{m^{\beta}})\frac{1}{m^{\beta}}$. So,
\begin{center}
 $\mathbb{P}=P(|ve'_p-ve_p|>\delta ve_p)\leq \frac {m^{\beta} ve_p}{m^{\beta/2}{\delta}^2 (ve_p)^2}$.
\end{center} 
We know that $ve_p\geq \frac{1}{\delta^2}m^{\beta/2}\log m$, so
\begin{center}
 $\mathbb{P}=P(|ve'_p-ve_p|>\delta ve_p)\leq \frac {1}{\log m}$.
\end{center} 
With the probability of at least $1-\mathbb{P}$, we have,
\begin{center}
$(1-\delta)ve_p\leq ve'_p \leq(1+\delta)ve_p$.
\end{center}
Also, for a large $m$, we have $\mathbb{P}\sim 0$.
\end{proof}

\subsection{Analysis of time and space complexity}

In the first step of the query time, we run the algorithm of \cite{vet}. The preprocessing time and space for constructing the data structure of \cite{vet} are $O(m\log m)$ and $O(m)$, respectively, which computes $VP_S(p)$ in $O(|VP_S(p)|\log(n/|VP_S(p)|))$ time. As we run this algorithm for at most $\frac{1}{\delta^3} m^{\beta/2}\log m$ steps, the query time of the first step is $O(\frac{1}{\delta^3} m^{\beta/2}\log m)$.

According to Lemma~\ref{exp}, $E(|RVT_i|)=O(m^{1-\beta})$.
Using Lemma~\ref{arr}, the expected preprocessing time and space for each $RVT_i$ are $O(m^{2-2\beta}\log m)$ and $O(m^{2-2\beta})$ respectively, such that in $O(\log m)$ we can calculate $(ve_p)_i$.
So, the expected preprocessing time and space are $m^{\beta/2}O(m^{2-2\beta}\log m)=O(m^{2-\frac{3}{2}\beta}\log m)$ and $m^{\beta/2}O(m^{2-2\beta})=O(m^{2-\frac{3}{2}\beta})$ respectively.

In the second step, for each $RVT_i$ the value of $(ve_p)_i$ is calculated in $O(\log m)$. Therefore, the query time is $O(\frac{1}{\delta^3}m^{\beta/2}\log m)+O(m^{\beta/2}\log m)$.
So, we have the following lemma.
 \begin{lemma}
\label{approx-end-point}
There exists an algorithm that for any query point $p$, approximates $ve_p$ in $O(\frac{1}{\delta^3}m^{\beta/2}\log m)$ query time using $O(m^{2-3\beta/2}\log m)$ expected preprocessing time and $O(m^{2-3\beta/2})$ expected space $(0\leq \beta\leq \frac{2}{3})$. This algorithm returns the exact value of $ve_p$ when $ve_p<\frac{1}{\delta^2}m^{\beta/2}\log m$. Otherwise,  a value of $ve_p'$ is returned such that with the probability of at least $1-\frac{1}{\log m}$, we have $(1-\delta)ve_p\leq ve_p'\leq (1+\delta)ve_p$.
\end{lemma}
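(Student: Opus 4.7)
The plan is to stitch together the three analyses already carried out in this section into a single consolidated statement. First I would handle the exact-answer regime by observing that the truncation in the first step of the query is set so that, if Vegter's sweep \cite{vet} terminates within $\frac{1}{\delta^3}m^{\beta/2}\log m$ steps, then $VP_S(p)$ has been fully computed and the exact count $ve_p$ is read off directly; in particular, this covers the case $ve_p<\frac{1}{\delta^2}m^{\beta/2}\log m$ claimed in the lemma. The data structure of \cite{vet} contributes only $O(m\log m)$ preprocessing and $O(m)$ space, well inside the overall budget, and adds at most $O(\frac{1}{\delta^3}m^{\beta/2}\log m)$ to the query cost.

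Next I would treat the sampling regime. The estimator $ve'_p = m^{\beta}\cdot\frac{1}{m^{\beta/2}}\sum_{i=1}^{m^{\beta/2}} (ve_p)_i$ is built from $m^{\beta/2}$ independent random subsets $RVT_i \subset VT_S$, each member chosen with probability $1/m^\beta$. By Lemma~\ref{exp}, each $RVT_i$ has expected size $O(m^{1-\beta})$, so applying Lemma~\ref{arr} subset-by-subset gives expected preprocessing $O(m^{2-2\beta}\log m)$, expected space $O(m^{2-2\beta})$, and query time $O(\log m)$ per subset. Summing over the $m^{\beta/2}$ subsets yields the claimed preprocessing $O(m^{2-3\beta/2}\log m)$, space $O(m^{2-3\beta/2})$, and an additive $O(m^{\beta/2}\log m)$ at query time; this is absorbed into the first-step bound $O(\frac{1}{\delta^3}m^{\beta/2}\log m)$ to give the stated total.

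For the probabilistic guarantee I would invoke the preceding lemmas directly: each $X_i = m^{\beta}(ve_p)_i$ is an unbiased estimator of $ve_p$ with $\mathrm{Var}(X_i) = m^{2\beta}\cdot ve_p\cdot(1-1/m^\beta)/m^\beta \leq m^{\beta}\cdot ve_p$. Chebyshev's inequality applied to the average of the $m^{\beta/2}$ independent copies, with $\varepsilon_1 = \delta\, ve_p$, gives failure probability at most $\frac{m^{\beta}\,ve_p}{m^{\beta/2}\,\delta^2\,(ve_p)^2}$, which is at most $\frac{1}{\log m}$ precisely under the hypothesis $ve_p \geq \frac{1}{\delta^2}m^{\beta/2}\log m$. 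I do not expect any genuine obstacle, since every ingredient is already in hand; the only point that warrants care is checking that the truncation threshold $\frac{1}{\delta^3}m^{\beta/2}\log m$ chosen for the first step is large enough so that whenever the algorithm actually enters the sampling branch, the Chebyshev lower bound on $ve_p$ is indeed met, which is the reason a factor $1/\delta^3$ (rather than $1/\delta^2$) appears in the truncation.
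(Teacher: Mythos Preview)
Your proposal is correct and follows essentially the same approach as the paper: the paper's Section~3.3 assembles exactly the three pieces you describe---the truncated sweep of \cite{vet} for the exact regime, the per-subset cost from Lemma~\ref{exp} and Lemma~\ref{arr} summed over $m^{\beta/2}$ copies for the preprocessing/space bounds, and the Chebyshev computation of Section~3.2 for the probabilistic guarantee. Your closing remark about the $1/\delta^3$ truncation ensuring the Chebyshev hypothesis $ve_p\geq \frac{1}{\delta^2}m^{\beta/2}\log m$ is met in the sampling branch matches the paper's implicit logic (and in fact the $1/\delta^3$ factor is really exploited only later in the proof of Theorem~\ref{maint}).
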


%\section{Proof of Lemma \ref{approx-components} and Theorem~\ref{maint}}
\section{An approximation algorithm for computing the number of components of $G(p)$}

In this section, we explain an algorithm to compute the number of connected components of $G(p)$, each is simply called a component of $G(p)$.

Let $c$ be a component such that  $p$ is not inside any of its faces. 
Without loss of generality we can assume that $p$ lies below $c$.
It is easy to see that there exist rays emanating from $p$ that do not intersect any segments corresponding to the vertices of $c$. We start sweeping one of these rays in a clockwise direction. Let $l(c)$ (left end-point of $c$) be the first end-point of a segment of $c$ and $r(c)$ (right end-point of $c$) be the last end-point of a segment of $c$ that are crossed by this ray. (Fig~\ref{vispart}). This way every component $c$ has $l(c)$ and $r(c)$ except the component containing $p$. Also, note that $r(c)$ and $l(c)$ do not depend on the choice of the starting ray. 
As said,  the bounding box is not a part of $G(p)$, but $G(p)$ is contained in the bounding box. %\begin{figure}[htb]
%\begin{center}
%\includegraphics[width=0.7\textwidth]{vispart2.pdf}
%\caption{The visible parts of a segment and the right and left end-points of a component are shown. $G(p)$ has three components; $\{s_1,s_6\}$, $\{s_3,s_4\}$ and $\{s_5\}$.}
%\label{vispart}
%\end{center}
%\end{figure}

\begin{figure}[htpb]
	\centering
	\begin{tikzpicture}[scale=0.3]
	
%.....%-----------1
\draw(0,20)--(20,20);
\draw(0,0)--(0,20);
\draw(0,0)--(20,0);
\draw(20,0)--(20,20);

\draw(9.5,17)node[above]{$s_1$};
\draw(2,18)--(17,16);
\filldraw(2,18)circle(2pt);
\filldraw(17,16)circle(2pt);

\draw(17,14.5)node[above]{$s_6$};
\draw(16,15)--(18,14);
\filldraw(16,15)circle(2pt);
\filldraw(18,14)circle(2pt);

\draw(18,14)node[right]{\tiny $l(s_6)$};

\draw(3,15.25)node[above]{$s_2$};
\draw(2,15)--(4,15.5);
\filldraw(2,15)circle(2pt);
\draw(2,15)node[left]{\tiny $l(s_2)$};

\filldraw(4,15.5)circle(2pt);

\draw(9,12.5)node[above]{$s_3$};
\draw(7,12)--(11,13);
\filldraw(7,12)circle(2pt);
\filldraw(11,13)circle(2pt);
\draw(7,12)node[left]{\tiny $l(s_3)$};

\draw(11.5,12.1)node[above]{$s_4$};
\draw(10,12.2)--(13,12);
\filldraw(10,12.2)circle(2pt);
\filldraw(13,12)circle(2pt);

\draw(13,12)node[right]{\tiny $r(s_4)$};

\draw(11,11.1)node[above]{$s_5$};
\draw(10.5,11.2)--(11.5,11);
\filldraw(10.5,11.2)circle(2pt);
\filldraw(11.5,11)circle(2pt);
\draw(10.5,11.2)node[left]{\tiny $l(s_5)$};

\draw(11.5,11)node[right]{\tiny $r(s_5)$};

\filldraw(9,2)circle(2pt);
\draw(9,2)node[below]{$p$};

\draw(9,2)--(0,18.6)[dashed];
\draw(9,2)--(3.3,17.7)[dashed];
\draw[thick](5.9,17.5)--(3.3,17.8);
\filldraw(5.9,17.5)circle(2pt);
\filldraw(3.3,17.8)circle(2pt);
\draw(5.9,17.5)node[above]{\tiny{$a'$}};
\draw(3.3,17.8)node[above]{\tiny{$a$}};

\draw(9,2)--(16.5,16.1)[dashed];
\filldraw(16.5,16.1)circle(2pt);
\draw(16.5,16.1)node[above]{\tiny{$b'$}};

\filldraw(14.8,16.3)circle(2pt);
\draw(14.8,16.3)node[above]{\tiny{$b$}};
\draw[thick](16.5,16.1)--(14.8,16.3);

\draw(9,2)--(20,16.6)[dashed];
\draw(9,2)--(5.9,17.5)[dashed];
\draw(9,2)--(14.8,16.3)[dashed];
\draw(9,2)--(11.9,12.1)[dashed];
\draw(9,2)--(10.7,12.3)[dashed];

\filldraw(20,16.6)circle(2pt);
\draw(20,16.6)node[right]{\tiny{$c$}};

\filldraw(0,18.6)circle(2pt);
\draw(0,18.6)node[left]{\tiny{$c'$}};

\end{tikzpicture}
\caption{$\overline{aa'}$ and $\overline{bb'}$ are the visible subsegments of $s_1$. The bounding box has one visible part from $c$ to $c'$. $G(p)$ has three components; $\{s_1,s_2,s_6\}$, $\{s_3,s_4\}$, and $\{s_5\}$. $l(s_2)$, $l(s_3)$, and $l(s_5)$ are the left end-points of these components, respectively. $r(s_6)$, $r(s_4)$, and $r(s_5)$ are the right end-points of these components, respectively.}
\label{vispart}
\end{figure}
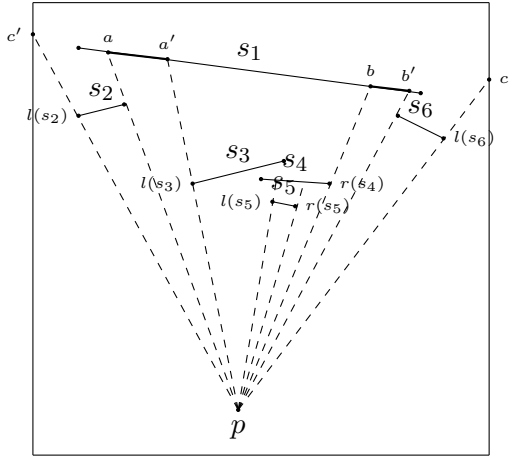

\begin{lemma}
\label{seg}
For each component $c$, except the one containing $p$, the projections of $l(c)$ and $r(c)$ both belong either to the same segment or the bounding box.
\end{lemma}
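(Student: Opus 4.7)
The plan is to proceed in three steps: first deduce that $l(c)$ and $r(c)$ are visible from $p$; then rule out that either projection lies on a segment of $c$; finally exhibit a single object along which both projections must lie. Throughout, let $\theta_L$ and $\theta_R$ denote the angles of $l(c)$ and $r(c)$ as seen from $p$, so that $\theta_R<\theta_L$ in the clockwise sweeping convention.

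For the first step, suppose $l(c)$ were occluded by some segment $s''$. Then $s''$ covers the endpoint $l(c)$, so $v_{s''}$ is adjacent in $G(p)$ to the vertex whose segment carries $l(c)$; this forces $s''\in c$. The angular range of $s''$ must then contain $\theta_L$, so in general position the left endpoint $l(s'')$ has angle strictly greater than $\theta_L$. But $l(s'')$ is then an endpoint of a segment of $c$ crossed by the sweep before $l(c)$, contradicting the definition of $l(c)$. A symmetric argument handles $r(c)$. The same reasoning, now applied to a hypothetical segment $s'\in c$ containing $pr(l(c))$, shows that $pr(l(c))$ and $pr(r(c))$ cannot lie on any segment of $c$, and therefore each lies either on a segment not in $c$ or on the bounding box.

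The main step is to compare the two projections. I would consider the region $R$ ``immediately behind $c$'': for each angle $\theta\in[\theta_R,\theta_L]$, the ray from $p$ at angle $\theta$ contributes to $R$ the portion from its first intersection with a segment of $c$ up to the next intersection with an object that is not in $c$. The region $R$ is bounded by two ray segments along the extreme angles $\theta_L$ and $\theta_R$, by parts of segments of $c$ on the near side, and by a polyline consisting of pieces of objects not in $c$ on the far side; the two points $pr(l(c))$ and $pr(r(c))$ are precisely where the extreme rays meet this far boundary. If the far boundary involved two distinct objects, they would meet at an endpoint $e$ that lies in the shadow of $c$ (meaning some segment of $c$ lies on the ray from $p$ to $e$ before $e$). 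The last point of that $c$-segment before $e$ projects to $e$, so $c$ covers $e$; if $e$ is the endpoint of a segment $s$ not in $c$, this produces an edge in $G(p)$ between $v_s$ and a vertex of $c$, forcing $s\in c$---contradicting that $s$ was one of the two non-$c$ objects on the far boundary. Transitions that occur at corners of the bounding box do not introduce a new object, as the bounding box is treated as a single entity.

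The main subtlety I foresee is ensuring that crossings of endpoints of segments of $c$ itself do not spoil this analysis by rearranging the far-side polyline. The resolution uses two structural facts: (i) the angular footprint of $c$ is a single interval $[\theta_R,\theta_L]$, because every edge of $G(p)$ links two segments whose angular ranges overlap and $c$ is connected; and (ii) the planar embedding of $G(p)$ given by Lemma~\ref{planargp} realizes the interface between the face containing $p$ and the component $c$ as one closed walk along edges of $c$, so the object of $G(p)$ adjacent to this interface from the outside is uniquely determined. Together these rule out any rearrangement of the far boundary other than through a covering of a non-$c$ endpoint by $c$, which has already been shown to be impossible. Hence the far boundary is a single segment outside $c$, or a single contiguous portion of the bounding box, and $pr(l(c))$ and $pr(r(c))$ both lie on it.
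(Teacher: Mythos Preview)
Your approach is different from the paper's. The paper argues directly, by infinite descent, that the segment $s$ carrying $pr(l(c))$ must extend all the way past $r(c)$: if $r(s)$ fell inside the angular range of $c$ it would be covered by some $s'\notin c$, whose left endpoint is in turn covered by some $s''\notin c$, and so on indefinitely. You instead build a region $R$ ``behind $c$'' and try to show its far boundary is a single object. Your first two steps (visibility of $l(c),r(c)$; projections not on segments of $c$) are correct, and your treatment of transitions that occur at endpoints of non-$c$ segments is sound.

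The gap is at transitions caused by endpoints of $c$-segments. Concretely, suppose at some angle the ray from $p$ meets, in order, $c_1, A, c_2, X$ with $c_1,c_2\in c$ and $A,X\notin c$; then your far boundary is $A$. When the sweep crosses $r(c_1)$, the first $c$-intersection becomes $c_2$ and the far boundary jumps to $X$. No endpoint of $A$ or $X$ is involved at that angle, so your covering argument does not fire. What is actually needed to exclude this configuration is the statement that no non-$c$ segment can lie strictly between two $c$-segments along any ray from $p$, and proving that statement forces exactly the infinite descent the paper uses: the endpoint $l(A)$ is covered either by a segment of $c$ (which would put $A$ in $c$) or by yet another non-$c$ segment trapped between $c_1$ and $A$, and one recurses. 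Your fact (ii) does not deliver this: the planar embedding of $G(p)$ records only which endpoints are covered and by which segment, not the full depth ordering of all segments along each ray, so an appeal to a ``uniquely determined adjacent object'' in that embedding does not by itself rule out the $c_1,A,c_2,X$ configuration above.
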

\begin{proof}
Assume that $pr(l(c))$ belongs to a segment $s\in S$. Since $l(s)$ is on the left of $l(c)$, $s$ can not be among the segments of $c$. We claim that $r(s)$ is on the right of $r(c)$. Obviously, if this claim is true then, if $pr(r(c))\in s'$, then $l(s')$ is on the left of $l(c)$. Clearly, if $s\neq s'$, then these two should intersect, which is impossible. Also, this implies that if $pr(l(c))$ is on the bounding box, then $pr(r(c))$ should to be on the bounding box as well.
The claim is proven by contradiction. Assume that $r(s)$ is on the left of $r(c)$. Since, $r(s)$ is not visible from $p$, then there should exist a segment $s'$ that covers $r(s)$. Since, $s$ is not in $c$ and $s'$ is connected to $s$, $s'$ can not be in $c$, so $l(s')$ is to the right of $l(c)$ and hence is not visible. Therefore, there should exist a different segment $s''$ that covers $l(c)$ and with the same argument $s''$ can not be in $c$ and $l(s'')$ should be covered by another segment. This process can not be continued indefinitely since the number of segments is finite and therefore we will reach a contradiction.
\end{proof}

Let $s'_1, s'_2, s'_3$, and $s'_4$ be the segments of the bounding box.
According to Lemma~\ref{seg}, we can associate a pair of adjacent visible subsegments or a connected visible part of the bounding box for each component of $G(p)$. For example, in Fig~\ref{vispart}, $s_1$ has two visible subsegments which are associated to the component  composed of $s_3$ and $s_4$. If we can count the number of visible subsegments of each segment and the number of visible parts of the bounding box, then we can compute the exact value of $C(G(p))$. Because each pair of consecutive visible subsegments of a segment and each visible part of the bounding box are associated to a component. 
Let $c'$ be the number of visible parts of the bounding box. If $c'>0$, then $p$ is in the unbounded face. So, if each segment $s_i$ has $c_i$ visible subsegments, then $C(G(p))=c'+\sum^n_{i=1} \max{\{(c_i-1),0\}}$ . For example in Fig~\ref{vispart}, $c_1=2$, $c_2=1$, $c_3=1$, $c_4=2$,  $c_5=1$ and $c_6=1$, also $c'=1$. This implied that $C(G(p))=3$.
If $c'=0$, then $p$ is in a bounded face and this face is contained in a component with no left and right end-point, so in this case $C(G(p))=1+\sum^n_{i=1} \max{\{(c_i-1),0\}}$.
 
In the following we propose an algorithm to approximate the number of visible subsegments of each segment $s_i\in S\cup \{s'_1,s'_2,s'_3,s'_4\}$.
\subsection{Algorithm}
According to \cite{gud}, it is possible to cover the visibility region of each segment $s_i\in S\cup \{s'_1,s'_2,s'_3,s'_4\}$ with $O(m_{s_i})$ triangles denoted by $VT(s_i)$. Here, $|VT(s_i)|=O(m_{s_i})$, where  $m_{s_i}$ is the number of edges of $EVG(S)$ incident on $s_i$. Note that the visibility triangles of $s_i$ may overlap. If we consider the visibility triangles of all segments, then there is a set $VT_S=\{\Delta_1,\Delta_2,\dots\}$ of $|VT_S|=O(m)$ triangles.
We say $\Delta_i$ is related to $s_j$ if and only if $\Delta_i\in VT(s_j)$.
For a given query point $p$, $m''_p$, the number of triangles in $VT_S$ containing $p$, is between $m_p$ and $2m_p$. So, $m''_p$ gives a 2-approximation factor solution for VCP \cite{gud}. Since the visibility triangles of each segment may overlap, some of the segments are counted repeatedly.
In \cite{gud}, it is shown that each segment $s_i$ is counted $c_i$ times, where $c_i$ is the number of visible subsegments of $s_i$. In other words, there are $c_i$ triangles related to $s_i$ in $VT_S$ which contain $p$. 

A similar approach can be used to approximate $C(G(p))$.
A random subset $RVT_1\subset VT_S$ is chosen such that each member of $VT_S$ is chosen with probability $\frac{1}{m^{\beta}}$.
For a given query point $p$, let $c'_{i,1}\geq1$ be the number of triangles related to $s_i$ in $RVT_1$ containing $p$. We report $C_1=\sum^{n}_{i=1} (m^{\beta}c'_{i,1}-1)$ as the approximated value of $C(G(p))$ received by $RVT_1$. We choose $m^{\beta/2}$ random subsets $RVT_1,\dots,RVT_{m^{\beta/2}}$ of $VT_S$. Let $p$ be the given query point, for each $RVT_j$, $C_j=\sum^{n}_{i=1} (m^{\beta}c'_{i,j}-1)$ is calculated. At last, $C'_p=\frac{\sum^{m^{\beta/2}}_{j=1} C_j}{ m^{\beta/2}}$ is reported as the approximation value of $C(G(p))$. 

\subsection{Analysis of approximation factor}
We show that with the probability at least $\frac{1}{\log m}$, if $C(G(p))>\frac{1}{\delta^2}m^{\beta/2}\log m$, then $C'_p$ is a $(1+\delta)$-approximation of $C(G(p))$.
\begin{lemma}
$E(C_j)=C(G(p))$.
\end{lemma}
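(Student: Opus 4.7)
The plan is to prove $E(C_j) = C(G(p))$ by combining linearity of expectation with a one-line Binomial expectation computation, and then matching the resulting sum with the formula for $C(G(p))$ derived just above.

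First I would fix the query point $p$ and invoke the result attributed to \cite{gud} in the paragraph preceding the lemma: for each segment $s_i \in S$ (and analogously for each of the four bounding-box segments $s'_1, s'_2, s'_3, s'_4$), exactly $c_i$ triangles of $VT_S$ related to $s_i$ contain $p$, where $c_i$ is the number of visible subsegments of $s_i$ from $p$. Since $RVT_j$ is obtained from $VT_S$ by independent Bernoulli sampling with success probability $1/m^\beta$, the variable $c'_{i,j}$ conditional on $p$ is distributed as $\mathrm{Binomial}(c_i, 1/m^\beta)$, so
\begin{equation*}
E[c'_{i,j}] \;=\; \frac{c_i}{m^\beta}, \qquad E[m^\beta c'_{i,j}] \;=\; c_i.
\end{equation*}

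Next I would apply linearity of expectation across the summation defining $C_j$. Because the summand $m^\beta c'_{i,j} - 1$ is only accumulated when $c'_{i,j} \geq 1$ (as stipulated immediately before the formula for $C_j$), the sum effectively runs over those segments $s_i$ with at least one sampled triangle containing $p$, and analogously contributes the bounding-box term. Taking expectations term by term gives $c_i - 1$ for each visible segment ($c_i \geq 1$) and $0$ for each invisible one, which is exactly $\max\{c_i - 1, 0\}$; the bounding-box segments contribute $c'$ (or $+1$ in the case that $p$ lies in a bounded face of $G(p)$, where no part of the bounding box is visible and the component containing $p$ must be added separately). Combining these with the formula $C(G(p)) = c' + \sum_{i=1}^{n}\max\{c_i - 1, 0\}$ (respectively $1 + \sum_i \max\{c_i - 1, 0\}$) established in the preceding section yields $E(C_j) = C(G(p))$.

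The main obstacle, and the only nontrivial aspect of the argument, is the bookkeeping that identifies each term of the random sum with the corresponding contribution to $C(G(p))$: one has to verify that segments with $c_i = 0$ contribute nothing to either side (so the indicator $c'_{i,j} \geq 1$ and the $\max\{\cdot, 0\}$ truncation are compatible in expectation), and that the bounding-box segments are handled correctly in both the unbounded-face and bounded-face cases. The probabilistic content of the proof is entirely contained in the single Binomial identity above; everything else is combinatorial accounting inherited from the planar structure of $G(p)$ established in Lemma~\ref{planargp} and Theorem~\ref{f}.
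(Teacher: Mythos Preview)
Your core computation---linearity of expectation plus the Binomial identity $E[c'_{i,j}]=c_i/m^\beta$---is exactly what the paper does; its entire proof is the single line
\[
E(C_j)=E\Bigl(\sum_{i=1}^{n}(m^{\beta}c'_{i,j}-1)\Bigr)=\sum_{i=1}^{n}E(m^{\beta}c'_{i,j}-1)=\sum_{i=1}^{n}(c_i-1)=C(G(p)),
\]
with no mention of the indicator $c'_{i,j}\ge1$, the $\max$ truncation, or the bounding box. So at the level the paper argues, your proposal is correct and matches it.

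Your attempt to be more careful than the paper, however, contains a slip. If the sum defining $C_j$ is genuinely restricted to indices with $c'_{i,j}\ge1$, then for a visible segment ($c_i\ge1$) the expected contribution is
\[
E\bigl[(m^\beta c'_{i,j}-1)\,\mathbf{1}_{\{c'_{i,j}\ge1\}}\bigr]
= m^\beta E[c'_{i,j}] - P(c'_{i,j}\ge1)
= c_i - P(c'_{i,j}\ge1),
\]
which is \emph{not} $c_i-1$, since $c'_{i,j}\sim\mathrm{Binomial}(c_i,1/m^\beta)$ takes the value $0$ with positive probability. Thus the indicator restriction and the $\max\{\cdot,0\}$ truncation are not ``compatible in expectation'' in the way you assert. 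The paper simply avoids this by treating the sum as unrestricted over $i=1,\dots,n$ and equating $\sum_i(c_i-1)$ with $C(G(p))$ directly (an identification that is itself loose with respect to the bounding-box term $c'$ and invisible segments, but that is the level of rigor the paper adopts). If you want a clean statement, drop the indicator interpretation and follow the paper's unrestricted reading of $C_j$.
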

\begin{proof}
$E(C_j)=E(\sum^{n}_{i=1} m^{\beta}c'_{i,j}-1)=\sum^{n}_{i=1} E(m^{\beta}c'_{i,j}-1)=\sum^{n}_{i=1} c_{i}-1=C(G(p))$.
\end{proof}
Using Lemma~\ref{cheb}, we have,
\begin{center}
$\mathbb{P}=P(|\frac{C_1+\dots+C_{m^{\beta/2}}}{m^{\beta/2}}-C(G(p))|>\delta C(G(p)) )\leq \frac{Var(C_i)}{m^{\beta/2}\delta^2C(G(p))^2}$.
\end{center}
$Var(C_i)=m^{2\beta}C(G(p))(\frac{1}{m^{\beta}})(1-\frac{1}{m^{\beta}})$.
Since we have, $C(G(p))>\frac{1}{\delta^2}m^{\beta/2}\log m$,
\begin{center}
$\mathbb{P}=P(|\frac{C_1+\dots+C_{m^{\beta/2}}}{m^{\beta/2}}-C(G(p))|>\delta C(G(p)) )\leq \frac{1}{\log m}$.
\end{center}
So, with the probability at least $1-\mathbb{P}$, 
\begin{center}
$(1-\delta)C(G(p))\leq C'_p\leq (1+\delta)C(G(p))$.
\end{center}
and for a large $m$, we have, $\mathbb{P}\sim 0$.
\subsection{Analysis of time and space complexity}
By Lemma~\ref{arr}, for each $RVT_i$, a data structure of expected preprocessing time and size of $O(m^{2-2\beta}\log m)$ and $O(m^{2-2\beta})$ is needed. 
$RVT_i$ returns $C_i$ in $O(\log m)$ for each query point $p$. So, the expected space for all $m^{\beta/2}$ data structures is $O(m^{2-2\beta+\beta/2}\log m)$ and the query time for calculating $C'_p$ is $O(m^{\beta/2}\log m)$. So, we have the following lemma.
\begin{lemma}
\label{approx-components}
There exists an algorithm that approximates $C(G(p))$ in $O(\frac{1}{\delta^2}m^{\beta/2}\log m)$ query time by using $O(m^{2-3\beta/2})$ expected preprocessing time and $O(m^{2-3\beta/2})$ expected space $(0\leq \beta\leq \frac{2}{3})$. For each query $p$, this algorithm returns a value $C'_p$ such that with probability at least $1-\frac{1}{\log m}$, $(1-\delta)C(G(p))\leq C'_p\leq (1+\delta)C(G(p))$ when $C(G(p))> \frac{1}{\delta^2}m^{\beta/2}\log m$.
\end{lemma}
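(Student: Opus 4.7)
My plan is to assemble the lemma from three ingredients already developed in this section: (i) the structural identity $C(G(p)) = c' + \sum_{i=1}^n \max\{c_i-1,0\}$ expressing the number of components in terms of $c_i$, the number of visible subsegments on each $s_i$ (including the four bounding-box sides), (ii) the covering of the visibility region of every segment by $O(m_{s_i})$ triangles from \cite{gud}, so that a single triangle set $VT_S$ of size $O(m)$ has the property that the triangles related to $s_i$ containing $p$ are exactly $c_i$ in number, and (iii) random sampling analyzed by Chebyshev's inequality in the spirit of the $ve_p$ algorithm in Section 3.

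Concretely, in the preprocessing phase I would construct $VT_S$ from $EVG(S)$, then independently draw $N = m^{\beta/2}$ subsets $RVT_1,\dots,RVT_N$ of $VT_S$, each obtained by including every triangle independently with probability $p_0 = m^{-\beta}$. To each $RVT_j$ I apply Lemma~\ref{arr} to build a point-location structure over the planar arrangement of the triangles in $RVT_j$; at each face I store, for every segment $s_i$, the number $c'_{i,j}$ of triangles of $RVT_j$ related to $s_i$ that contain the face (this is a local count that changes by one when crossing an edge, so it is maintained during the arrangement traversal). On a query $p$, for each $j$ I locate $p$ in the $j$-th arrangement in $O(\log m)$ time, read off the per-segment counts $c'_{i,j}$, form $C_j = \sum_i(m^{\beta} c'_{i,j} - 1)$, and return the average $C'_p = \frac{1}{N}\sum_{j=1}^N C_j$.

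The correctness analysis is a routine calculation modeled on Section 3. Since $E[m^{\beta} c'_{i,j}] = c_i$, we obtain $E[C_j] = \sum_i(c_i-1) = C(G(p))$ (with the small additive adjustment for $c'$ absorbed into the identity, which I will state carefully using $\max\{c_i-1,0\}$ and treating the bounding box via the same formula). For the variance, each segment $s_i$ contributes $m^{2\beta}\cdot c_i \cdot p_0(1-p_0)$ by independence across triangles, giving $\mathrm{Var}(C_j) = O(m^{\beta} C(G(p)))$. Applying Chebyshev (Lemma~\ref{cheb}) with $\varepsilon_1 = \delta\, C(G(p))$ and $N = m^{\beta/2}$ samples yields
\[
\Pr\bigl(|C'_p - C(G(p))| > \delta C(G(p))\bigr) \le \frac{m^{\beta} C(G(p))}{m^{\beta/2}\delta^2 C(G(p))^2} = \frac{m^{\beta/2}}{\delta^2 C(G(p))}.
\]
Under the hypothesis $C(G(p)) > \frac{1}{\delta^2} m^{\beta/2}\log m$, the right-hand side is at most $1/\log m$, which is the claimed failure probability.

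For the resource bounds, by Lemma~\ref{exp} each $RVT_j$ has expected size $O(m^{1-\beta})$, so Lemma~\ref{arr} gives expected preprocessing time $O(m^{2-2\beta}\log m)$ and expected space $O(m^{2-2\beta})$ per sample. Summing over $N = m^{\beta/2}$ samples yields expected preprocessing $O(m^{2-3\beta/2}\log m)$ and expected space $O(m^{2-3\beta/2})$, and the query cost is $N \cdot O(\log m) = O(m^{\beta/2}\log m)$, absorbing the $1/\delta^2$ factor inherited from the sampling parameter. The step that needs the most care is not the Chebyshev bound but the bookkeeping in (i): justifying that the per-face counts $c'_{i,j}$ can be read in $O(\log m)$ total per query even though they are indexed by $i$; I will handle this by storing for each face only the single aggregate $\sum_i(m^{\beta} c'_{i,j}-1)$, which is a face attribute that increments by $\pm m^{\beta}$ across each triangle edge and by $-1$ whenever a segment's count transitions in or out of zero, so the full value $C_j$ at $p$ is obtained by one point-location lookup rather than by iterating over $i$.
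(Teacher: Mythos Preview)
Your proposal is correct and follows essentially the same approach as the paper: the same triangle covering $VT_S$ from \cite{gud}, the same $m^{\beta/2}$ independent Bernoulli samples with inclusion probability $m^{-\beta}$, the same estimator $C_j=\sum_i(m^{\beta}c'_{i,j}-1)$ averaged to form $C'_p$, and the same Chebyshev bound combined with Lemma~\ref{arr} for the resource analysis. If anything, your final paragraph about storing the single aggregate $\sum_i(m^{\beta}c'_{i,j}-1)$ per face (rather than per-segment counts) is more explicit than the paper, which simply asserts that each $RVT_j$ returns $C_j$ in $O(\log m)$ via point location without spelling out how the sum over $i$ is avoided at query time.
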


\section{Conclusion}
In this paper, a randomized algorithm is proposed to compute an approximation answer to VCP.
The main ideas of the algorithm that reduce the complexity of previous methods are random sampling and breaking the query into two steps. The time and space complexity of our algorithm depend on the size of $EVG(S)$.
A planar graph is associated to each query point $p$. It is proven that the answer is equal to $ve_p-C(G(p))$, where $ve_p$ is the number of visible end-points and $C(G(p))$ is the number of connected components in the planar graph.
To improve the running time of our algorithm instead of finding the exact values of $ve_p$ and $C(G(p))$, we approximate these values. Although an exact calculation of $ve_p$ using a tradeoff between the query time and the space is possible. 

\section*{Appendix}

\subsection*{Proof of Theorem~\ref{maint}}
Assume that we have the data structures of \cite{vet} and the algorithms of Lemma~\ref{approx-end-point} and Lemma~\ref{approx-components}.
For a given query point $p$,
we first run the algorithm of \cite{vet} for $\frac{2}{\delta^3}m^{\beta/2}\log m$ steps. If $VP_S(p)$ is calculated, then the exact value of 
$m_p$ is computed in $O(\frac{1}{\delta^3}m^{\beta/2}\log m)$. Otherwise, we calculate $ve'_p$ and $C'_p$ in $O(\frac{1}{\delta^2}m^{\beta/2}\log m)$ time by Lemma~\ref{approx-end-point} and Lemma~\ref{approx-components}. In the second case, we have $m_p>\frac{2}{\delta^3}m^{\beta/2}\log m$. As $m_p\leq 2ve_p$, so  $ve_p>\frac{1}{\delta^3}m^{\beta/2}$. Lemma~\ref{approx-end-point} also implies $|ve'_p-ve_p|\leq \delta ve_p$.

In the following it is shown that if $C'_p<(1+\delta)\frac{1}{\delta^2}m^{\beta/2}\log m$, then with probability at least $1-\frac{1}{\log m}$, we have $C(G(p))<\frac{1+\delta}{1-\delta}\frac{1}{\delta^2}m^{\beta/2}\log m$.

By Lemma~\ref{cheb} we have, 
\begin{center}
$p(|C(G(p)-C'_p|>\delta C(G(p))\leq \frac{m^{\beta/2}}{\delta^2C(G(p)}$

$p(C(G(p))>\frac{C'_p}{1-\delta})\leq \frac{m^{\beta/2}}{\delta^2 C(G(p))}$.
\end{center}
Which means, if $C(G(p))>\frac{1}{\delta^2}m^{\beta/2}\log m$, then with probability at most $\frac{1}{\log m}$ we have, $C(G(p))>\frac{C'_p}{1-\delta}$. So, with a probability at least $1-\frac{1}{\log m}$ we have 
\begin{center}
$C(G(p))\leq \frac{C'_p}{1-\delta}\leq \frac{1+\delta}{1-\delta}\frac{1}{\delta^2}m^{\beta/2}\log m$.
\end{center}
 Since, $C(G(p))\leq m_p$ and $m_p>\frac{1}{\delta^3}m^{\beta/2}\log m$, we have, $C(G(p))\leq \frac{1+\delta}{1-\delta}\delta m_p$. We know that $ve_p=m_p+C(G(p))$, and with probability at least $1-\frac{1}{\log m}$ we have, $(1-\delta)ve_p\leq ve'_p\leq (1+\delta)ve_p$, thus
\begin{center}
$\frac{ve'_p}{1+\delta}\leq m_p+C(G(p))\leq m_p+\frac{1+\delta}{1-\delta}\delta m_p$. 
\end{center}
Which implies
\begin{center}
$m_p\leq \frac{ve'_p}{1-\delta}\leq \frac{1+\delta^2(1+\delta)}{(1-\delta)^2}m_p$.
\end{center}
Let $1+\delta^*=\frac{1+\delta^2(1+\delta)}{(1-\delta)^2}$, then
\begin{center}
$m_p\leq \frac{ve'_p}{1-\delta}\leq (1+\delta^*)m_p$.
\end{center}
So, $\frac{ve'_p}{1-\delta}$ is reported as the approximated value of $m_p$.

If $C'_p>(1+\delta)\frac{1}{\delta^2}m^{\beta/2}\log m$, then according to Lemma~\ref{cheb}, we have,
\begin{center}
  $P(|C'_p-C(G(p)|\geq \frac{1}{\delta}m^{\beta/2}\log m )\leq \frac{m^{\beta/2} C(G(p))}{\frac{1}{\delta^2}m^{\beta}\log^2 m}$.
 \end{center} 
  So, $P(C(G(p))<C'_p-\frac{1}{\delta}m^{\beta/2}\log m)\leq \frac{m^{\beta/2} C(G(p))}{\frac{1}{\delta^2}m^{\beta}\log^2 m}$. We know that $C'_p>(1+\delta)\frac{1}{\delta^2}m^{\beta/2}\log m$. So, if $C(G(p))\leq \frac{1}{\delta^2}m^{\beta/2}\log m$, then $P(C(G(p))< \frac{1}{\delta^2}m^{\beta/2}\log m)\leq \frac{1}{\log m}$. We conclude that with probability at least $1-\frac{1}{\log m}$, $C(G(p))>\frac{1}{\delta^2}m^{\beta/2}\log m$, and so we can use Lemma~\ref{approx-components}.
\begin{center}
$m_p=ve_p-C(G(p))\leq \frac{ve'_p}{1-\delta}-\frac{C'_p}{1+\delta}$

$\leq \frac{(1+\delta)ve_p}{1-\delta}-\frac{(1-\delta)C(G(p))}{1+\delta}$

$\leq ve_p-C(G(p))+\frac{2(\delta) ve_p}{1-\delta}+\frac{2\delta C(G(p))}{1+\delta}$.
\end{center}
We know that $C(G(p))\leq m_p$. Moreover, we have, $m_p\leq 2ve_p$, so
\begin{center}
$\leq m_p+\frac{4\delta m_p}{1-\delta}+\frac{2\delta m_p}{1+\delta} $

$\leq (1+\frac{4\delta}{1-\delta}+\frac{2\delta}{1+\delta})m_p$.
\end{center}
Let $\delta^*=\frac{4\delta}{1-\delta}+\frac{2\delta}{1+\delta}$, then
\begin{center}
$m_p\leq \frac{ve'_p}{1-\delta}-\frac{C'_p}{1+\delta}\leq (1+\delta^*)m_p$.
\end{center}
Therefore, $\frac{ve'_p}{1-\delta}-\frac{C'_p}{1+\delta}$ is reported as the approximated value of $m_p$.
Note that $\delta^*<1$ and it can be arbitrary small by choosing $\delta$ small enough.

The query time of our algorithm is $O(\frac{1}{\delta^3}m^{\beta/2}\log m)$,  where the dependence of the variable $\delta^*$ to $\delta$ is as follows. When $\delta$ is less than a fixed constant $C$, $\delta^*$ is at most a linear fixed multiple of $\delta$ and hence, the query time of the algorithm can be expressed as $O(\frac{1}{\delta^3}m^{\beta/2}\log m)$. Note that for $\delta>C$ since $\delta^{-3}<C^{-3}$ it will be absorbed in the constant hidden in $O(m^{\beta/2}\log m)$.
\end{document}